\DeclarePairedDelimiter{\abs}{\lvert}{\rvert} 
\DeclarePairedDelimiter{\ceil}{\lceil}{\rceil}
\newcommand{\R}{\ensuremath{\mathbb{R}}\xspace}
\newcommand{\N}{\ensuremath{\mathbb{N}}\xspace}
\newcommand{\ACO}{\ensuremath{\mathrm{AC}^0_\mathbb{R}}\xspace}
\newcommand{\FOArb}{\ensuremath{\mathrm{FO}_\R[\Arb]}\xspace}
\newcommand{\C}{\ensuremath{\mathcal{C}}\xspace}
\newcommand{\Cn}{\ensuremath{\cn_n}}
\newcommand{\bO}{\mathcal{O}}
\newcommand{\struc}[1]{\ensuremath{\mathcal{#1}}}
\newcommand{\feq}{=} 
\newcommand{\fassign}{=}
\newcommand{\fsize}{\textit{tree-shape-size}\xspace}
\newcommand{\fs}{\ensuremath{\textit{tss}}}
\newcommand{\imp}{\to}
\newcommand{\biimp}{\leftrightarrow}
\newcommand{\Rinf}{\ensuremath{\R^\infty}}
\newcommand{\new}{\emph}
\newcommand{\defeq}{\coloneqq}
\newcommand{\mysign}{\textit{sign'}}
\newcommand{\ow}[1][w]{\overline{#1}}
\newcommand{\enc}{\text{enc}}
\newcommand{\sumi}[1][i]{\ensuremath{{\textit{sum}_{#1}}}\xspace}
\newcommand{\sumiq}{\ensuremath{{\textit{sum}_i^q}}\xspace}
\newcommand{\prodi}[1][i]{\ensuremath{{\textit{prod}_{#1}}}\xspace}
\newcommand{\prodiq}{\ensuremath{{\textit{prod}_i^q}}\xspace}
\newcommand{\maxi}[1][i]{\ensuremath{\max_{#1}}}
\newcommand{\cn}{\ensuremath{C}}
\newcommand{\circclass}{\ensuremath{\mathfrak{C}}}
\newcommand{\ptr}{\ensuremath{\mathrm{P}_\R}}
\newcommand{\ltr}{\ensuremath{\mathrm{LT}_\R}}
\newcommand{\uni}[1]{\ensuremath{\mathrm{U}_{#1}}}
\newcommand{\unil}{\uni{\ltr}}
\newcommand{\unip}{\uni{\ptr}}
\newcommand{\unifo}{\uni{\FO}}
\newcommand{\FO}{\ensuremath{\mathrm{FO}_\R}\xspace}
\newcommand{\FOR}{\FO}
\newcommand{\SUM}{\ensuremath{\mathrm{SUM}_\R}\xspace}
\newcommand{\PROD}{\ensuremath{\mathrm{PROD}_\R}\xspace}
\newcommand{\MAX}{\ensuremath{\mathrm{MAX}_\R}\xspace}
\newcommand{\FTIME}{\ensuremath{\mathrm{FTIME}_\R}\xspace}
\newcommand{\Arb}{\ensuremath{\mathrm{Arb}_\R}\xspace}
\newcommand{\tal}{\ensuremath{\tau_\mathrm{ar\_circ}}\xspace}
\newcommand{\struct}{\textit{Struct}}
\newcommand{\structR}{\ensuremath{\struct_\R}}
\newcommand{\ol}[1]{\ensuremath{\overline{#1}}\xspace}
\newcommand{\funi}{\ensuremath{\text{U}_{\FOR}}}
\newcommand{\ttup}{\ensuremath{\tau_{\text{tuple}}}\xspace}
\newcommand{\mine}{\ensuremath{\textnormal{min}}\xspace}
\newcommand{\smine}{\ensuremath{\textnormal{second\_min}}\xspace}
\newcommand{\rk}{\ensuremath{\textit{rank}}}
\newcommand{\papp}[1]{\ensuremath{\left\langle p; #1 \right\rangle}}
\newcommand{\tree}{\ensuremath{\textit{tree}}}
\newcommand{\Vars}{\ensuremath{\mathrm{Vars}}\xspace}
\begin{document}
\title{A Logical Characterization of Constant-Depth Circuits over the Reals\thanks{Supported by DFG VO 630/8-1.}}
%
%
\author{Timon Barlag\orcidID{0000-0001-6139-5219} \and
Heribert Vollmer\orcidID{0000-0002-9292-1960}}
\authorrunning{T. Barlag and H. Vollmer}
%
\institute{Leibniz University, Hanover, Germany\\
\email{\{barlag,vollmer\}@thi.uni-hannover.de}\\
\url{https://www.thi.uni-hannover.de}}
\maketitle              
\begin{abstract}

In this paper we give an Immerman Theorem for real-valued computation, i.e.,  we define circuits of unbounded fan-in operating over real numbers and show that families of such circuits of polynomial size and constant depth decide exactly those sets of vectors of reals that can be defined in first-order logic on \R-structures in the sense of Cucker and Meer.

Our characterization holds both non-uniformly as well as for many natural uniformity conditions.

\keywords{Computation over the reals \and descriptive complexity \and con\-stant-depth circuit families.}
\end{abstract}


\newpage

\section{Introduction}

Computational complexity theory is a branch of theoretical computer science which focuses on the study and classification of problems with regard to their innate difficulty. This is done by dividing these problems into classes, according to the amount of resources necessary to solve them using particular models of computation. One of the most prominent such models is the Turing machine -- a machine operating sequentially on a fixed, finite vocabulary. 

If one wishes to study problems based on their parallel complexity or in the domain of the real numbers, one requires different models of computation. Theoretical models exist both for real-valued sequential and for real-valued parallel computation, going back to the seminal work by Blum, Shub and Smale \cite{DBLP:conf/focs/BlumSS88}, see also \cite{DBLP:books/lib/Blum98}. Their aim was to lay the foundation for a theory of scientific computation, an area going back to Newton, Euler and Gauss, with algorithms over the real numbers. Going even a step further, John von Neumann aimed for a formal logic amenable to mathematical analysis and the continuous concept of the real number.

Unlike Turing machines, machines over $\R$ obtain not an unstructured sequence of bits as input but a vector of real numbers or an (encoding of an) $\R$-structure. The respective parallel model we are going to have a closer look at is a real analogue to the arithmetic circuit (see, e.g., \cite{DBLP:books/daglib/0097931}), which, as its name suggests, resembles electrical circuits in its functioning, however, contrary to these our model operates not on electrical signals, i.e., Boolean values, but real numbers. 

Descriptive complexity is an area of computational complexity theory, which groups decision problems into classes not by bounds on the resources needed for their solution but by considering the syntactic complexity of a logical formalism able to express the problems. Best known is probably Fagin's characterization of the class NP as those problems that can be described by existential second-order formulas of predicate logic \cite{fag74}. Since then, many complexity classes have been characterized logically.  Most important in our context is a characterization obtained by Neil Immerman, equating problems decidable by (families of) Boolean circuits of polynomial size and constant depth consisting of gates of unbounded fan-in, with those describable in first-order logic: 

\begin{theorem}[\cite{DBLP:journals/siamcomp/Immerman89}]
$\mathrm{AC}^0=\mathrm{FO}$. 
\end{theorem}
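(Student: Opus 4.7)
The plan is to prove the two inclusions separately. For the direction $\mathrm{FO} \subseteq \mathrm{AC}^0$, I would proceed by structural induction on a first-order formula $\varphi$. An atomic formula $R(x_1,\dots,x_r)$ translates to a projection onto a single input bit, which is a trivial sub-circuit. The Boolean connectives $\neg, \wedge, \vee$ become single gates of the corresponding type. The essential step is the translation of quantifiers: $\exists x\,\psi(x,\bar y)$ should become an unbounded fan-in OR gate with one input per element of the $n$-element universe, and dually $\forall x\,\psi$ becomes a big AND. Since $\varphi$ is fixed, the number of quantifier blocks is constant, which yields constant depth; each quantifier introduces only a factor-$n$ blow-up in size, giving a total size of $n^{\bO(|\varphi|)}$.

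For the converse direction $\mathrm{AC}^0 \subseteq \mathrm{FO}$, I would start from a circuit family $\{C_n\}$ of depth $d$ and polynomial size $n^c$, and identify each gate with a $c$-tuple of elements of the $n$-element universe, partitioned by level. I would then introduce numerical predicates $\textit{type}(g,t)$ and $\textit{wire}(g,h)$ encoding the gate types and predecessor relations respectively; in the non-uniform version these are simply arbitrary predicates, while in the (dlogtime-)uniform version they must be FO-definable using the BIT predicate. The formula would then use $d$ alternating quantifier blocks over such tuples, asserting that from the output gate one can trace, gate-by-gate, through predecessors consistent with the gate types, down to an input wire carrying the correct value. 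Because $d$ is constant, the resulting formula has constant quantifier rank, and its correctness follows by induction on the level.

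The main obstacle I anticipate is the careful matching of circuit depth to quantifier rank: a single first-order quantifier over the universe must correspond to \emph{exactly one} unbounded fan-in gate, and not to $\bO(\log n)$ binary ones, so that the correspondence between quantifier alternation and depth remains tight. Handling uniformity correctly, specifically verifying that the wiring predicates of $\{C_n\}$ are themselves FO-definable with BIT, is the most delicate technical step, and is where the substance of Immerman's original argument lies.
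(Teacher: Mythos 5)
Your proposal is correct and follows essentially the same two-directional scheme that this paper uses for its real-valued analogue (Theorem~\ref{thm-FOArb}): formulas are compiled into constant-depth circuits by turning quantifiers into single unbounded fan-in gates, and circuits are described in the logic by encoding gate types and wiring as (arbitrary, respectively suitably uniform) numerical predicates and evaluating level by level with constantly many quantifier alternations. Note that the paper itself only cites this Boolean statement from Immerman rather than proving it, so your sketch matches the intended standard argument; just make sure the ``tracing'' step is formalized as the level-indexed value predicate with $\exists$ at OR gates and $\forall$ at AND gates, not as a single existential path.
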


An important issue in circuit complexity is \textit{uniformity}, i.e., the question if a finite description of an infinite family of circuits exists, and if yes, how complicated it is to obtain it. 
Immerman's Theorem holds both non-uniformly, i.e., under no requirements on the constructability of the circuit family, as well as for many reasonable uniformity conditions \cite{DBLP:journals/jcss/BarringtonIS90}. In the non-uniform case, first-order logic is extended by allowing access to arbitrary numerical predicates, in symbols: non-uniform $\mathrm{AC}^0=\mathrm{FO}[\mathrm{Arb}]$.

The rationale behind the descriptive approach to complexity is the hope to make tools from logic on expressive power of languages available to resource-based complexity and use non-expressibility results to obtain lower bounds on resources such as time, circuit size or depth, etc.


Descriptive complexity seems very pertinent for real-valued computation, since formulas operate directly on structured inputs, which seems quite natural, while computation models generally work on encodings of the input structure, which is an additional abstraction.

In the area of descriptive complexity over the reals,  one usually considers \textit{metafinite structures}, that is finite first-order structures enriched with a set of functions into another possibly infinite structure, in our case the real numbers $\R$.
This study was initiated by Grädel and Meer \cite{DBLP:conf/stoc/GradelM95}, presenting logical characterizations of $\mathrm{P}_\R$ and $\mathrm{NP}_\R$. 
Continuing this line of research, Cucker and Meer obtained a few logical characterizations for bounded fan-in real arithmetic circuit classes \cite{DBLP:journals/jsyml/CuckerM99}, which is what the present paper builds on. Cucker and Meer first proved a characterization of $\mathrm{P}_\R$ using fixed-point logic, and building on this characterized the classes of the NC-hierarchy (bounded fan-in circuits of polynomial size and polylogarithmic depth) restricting the number of updating in the definition of fixed points to a polylogarithmic number. 
They leave out the case of the very low circuit class $\mathrm{AC}^0_\R$, a subclass of $\text{NC}^1_\R$. 
We now expand on their research by making the framework of logics over metafinite structures amenable for the description of \textit{unbounded fan-in circuits}; we are particularly concerned with a real analogue to the class $\ACO$ and show that it corresponds to first-order logic over metafinite structures: 

\begin{theorem}\label{main-thm}
$\ACO{}=\FO$. 
\end{theorem}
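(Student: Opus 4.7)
The plan is to establish the equality $\ACO = \FO$ by two inclusions, each proved by an induction that matches syntactic constructors of the logic with gate types of the circuit. The overall correspondence is essentially one-for-one: Boolean connectives $\land,\lor,\neg$ with AND/OR/NOT gates, first-order quantifiers $\forall,\exists$ with unbounded fan-in AND/OR gates over the universe of size $n$, and the Cucker--Meer aggregation operators $\sumi$, $\prodi$, $\maxi$ with unbounded fan-in summation, product, and maximum gates that carry real values. Atomic formulas comparing real-valued terms translate to equality and sign-test gates, and real-valued terms themselves are evaluated by subcircuits whose wires carry real numbers rather than bits.

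For the inclusion $\FO \subseteq \ACO$, I would proceed by structural induction on a fixed $\FO$ formula $\varphi$ and its subterms, translating each real-valued term into a real-valued subcircuit and each Boolean subformula into a Boolean subcircuit. The base cases are input functions, input constants, and input relations of the $\R$-structure, which are read off input gates; each application of $+$ or $\cdot$ adds a constant to the depth, and each application of a quantifier or aggregation operator introduces a single unbounded fan-in gate whose fan-in is $n$ (or polynomial in $n$, if several variables are aggregated simultaneously). Since $\varphi$ is fixed, its nesting depth is constant, producing a circuit family of constant depth; the size remains polynomial because each nested aggregated variable contributes only a factor of $n$ to the fan-in.

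For the reverse inclusion $\ACO \subseteq \FO$, I would encode the gate graph of $C_n$ by numerical predicates over the universe whose arity is large enough to index all $\mathrm{poly}(n)$ gates, together with predicates identifying the type of each gate and the real constants attached to input constant gates. An $\FO$ formula then describes the value at a gate by a block of first-order or aggregation quantifiers ranging over the indices of its children, combined via the appropriate connective or aggregation operator depending on the gate type; one such block is introduced per circuit level, so constant circuit depth yields constant nesting depth in the formula. In the non-uniform case, the numerical predicates describing the circuit graph are arbitrary, yielding $\FOArb$; in each of the uniform cases treated in the paper, one shows instead that the FO description of the circuit family can itself be produced within the corresponding resource bound (for instance, via $\FTIME$-computability for the natural uniformity notion on $\R$-machines), and symmetrically that uniformly defined formulas give rise to uniformly constructible circuit families.

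The main obstacle I expect is not the inductive skeleton of either direction, which closely mirrors the Boolean Immerman proof, but rather the careful bookkeeping required by the coexistence of two value types in both models: real-valued wires feeding Boolean test gates, and Boolean guards contributing to sum or product gates via $0/1$ coefficients. One must therefore fix a canonical gate basis for $\ACO$ that lines up with the term- and formula-constructors of $\FO$ so that atomic formulas such as $t_1 \feq t_2$ and $t_1 \le t_2$ translate transparently to equality and sign-test gates, and conversely so that sum, product, and max gates translate back to $\sumi$, $\prodi$, and $\maxi$ terms without extra overhead. Once this interface between real and Boolean layers has been pinned down, verifying that the two translations preserve each of the stated uniformity conditions should be largely a matter of tracking resource bounds through the level-by-level construction.
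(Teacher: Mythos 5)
Your overall plan coincides with the paper's: structural induction in both directions, with each quantifier becoming a single unbounded fan-in gate, and conversely gate values described level by level from numerical functions that encode the circuit graph (arbitrary interpretations in the non-uniform case, time-computable ones in the uniform cases). However, several points you treat as interface bookkeeping are exactly where the paper's work lies. First, the gate basis: $\ACO$ as defined here has only $+$, $\times$ and sign gates besides inputs, constants and outputs. There are no Boolean, comparison or maximum gates, so $\land,\lor,\neg,\exists,\forall$ and atomic comparisons must be simulated arithmetically (sign of a sum for $\exists$, products for $\forall$ and $\land$, and the auxiliary relational gates eliminated in Lemma~\ref{lem_aux_gates}); dually, $\maxi$ is not a gate at all and is instead eliminated on the logic side (Lemma~\ref{lem_fo_max}), while the $\sumi$/$\prodi$ terms your reverse translation produces are not in the paper's base $\FO$ at all --- they are harmless only because $\FOArb = \FOArb + \SUM = \FOArb + \PROD$ (Lemma~\ref{lem_fo_sum_prod}), respectively because they are $\FTIME$-definable in the $\mathrm{P}_\R$-uniform case, and in the $\ltr$-uniform case they genuinely have to be added to the logic (Theorem~\ref{thm_U_L-AC0}).

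Second, your level-by-level value definition tacitly assumes every wire goes from level $x-1$ to level $x$. In the real-valued setting this is not cosmetic: the term for a $+$ gate is a sum over all gate indices of $pred(\cdot)\times val_{x-1}(\cdot)$, and a predecessor that does not sit exactly one level down contributes a garbage value multiplied by a nonzero guard. The paper proves Lemma~\ref{lem_full_tree} (every $\ACO$ family can be replaced, at polynomial cost, by a tree-like family in which all paths from inputs to any fixed gate have equal length) precisely to make this induction sound; your sketch needs that normalization or a modification of $val_x$ that carries values of shallower gates forward. Third, for $\unil$-$\ACO$ the uniformity of the constructed circuits is not ``a matter of tracking resource bounds'': the paper introduces a post-order gate numbering computed from the \fsize{} of subcircuits (Lemma~\ref{lem_subcircuit_fullsize}) so that the type and predecessors of a single queried gate can be produced in time $\bO(\log n)$ without constructing the circuit, and this occupies most of the proof of Theorem~\ref{thm_U_L-AC0}. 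None of these is a wrong idea on your part, but each is a concrete step your proposal would still have to supply.
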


Cucker and Meer only note that \enquote{the expressive power of first-order logic is not too big} \cite{DBLP:journals/jsyml/CuckerM99} since it can only describe properties in $\mathrm{NC}^1_\R$. In a sense we close the missing detail in their picture by determining a circuit class corresponding to first-order logic.

The logical characterization of Theorem~\ref{main-thm} holds for arbitrary uniformity conditions based on time-bounded construction of the circuit family, in particular \ptr{}-uniformity and \ltr{}-uniformity.
Extending the framework of Cucker and Meer (who only considered uniform circuits families), we also characterize non-uniform $\ACO$ by first-order logic enhanced with arbitrary numerical predicates.

But the most important case of circuit uniformity in this context is maybe the following: In the Boolean and arithmetic context, it is known that the numerical predicates of addition and multiplication play a special role: If we enhance first-order logic by these, we obtain a logic as powerful as constant-depth circuit families where the uniformity condition itself can be specified in first-order logic: U$_\text{FO}\text{-}\mathrm{AC}^0=\mathrm{FO}[+,\times]$ \cite{DBLP:journals/jcss/BarringtonIS90}. We prove a quite analogous result for real computation: Enhancing first-order logic over the reals with the so called sum and product rules defines exactly those metafinite structures that can be recognized by constant-depth polynomial-size circuit families over the reals where the circuit itself seen as a structure can be specified by first-order logic: 

\begin{theorem}
$\unifo\text{-}\ACO=\FO+\SUM+\PROD$.	
\end{theorem}

This paper is structured as follows: In the next section, we introduce the reader to machines and circuits over $\R$ and the complexity classes they define. We also introduce logics over metafinite structures and prove a couple of auxiliary results concerning useful extensions of $\FO$. Section~\ref{main-nu} proves the correspondence between first-order logic on the one hand side and circuit families of constant-depth and polynomial-size on the other hand, first in the non-uniform case.
The remaining three sections then turn to different uniform versions of this correspondence: Quite an easy case is uniformity defined by polynomial-time, which we present in Section~\ref{main-up}. The proof here is just a quite direct adaptation of the non-uniform case. 
More work is required for logtime-uniformity, since because of the restricted power of the uniformity-machine, a particular numbering for the circuit gates is needed.
 We handle this case in Section~\ref{main-ult}, and we also present a general theorem for uniformity given by arbitrary time-bounds in that section. Finally, in Section~\ref{main-ufo}, we turn to uniformity defined by logical formulas, FO-uniformity. In the technically maybe most-challenging result, we prove the correspondence between \ACO-circuit families with FO-uniformity and \FO enhanced by the addition and multiplication rule, though it should be mentioned that the basic proof structure is still identical to the one for the non-uniform case.
We close by mentioning some questions for further work.


\section{Preliminaries}

In this section, we give an introduction to the machine models and logic over \R{} used in this paper -- which are mostly taken from Cucker and Meer \cite{DBLP:journals/jsyml/CuckerM99} -- and some extensions thereof which we will make use of later on.

\subsection{Machines over \R{}}

Machines over \R{}, which were first introduced by Blum, Shub and Smale \cite{DBLP:conf/focs/BlumSS88}, operate on an unbounded tape of registers containing real numbers. They can evaluate real polynomials and divisions of real polynomials in a single step and branch out by checking if the number contained in a cell is nonnegative. A function $f$ is said to be (\R{}-)\new{computable} if and only if there exists an \R-machine{} $M$, whose input-output-function is exactly $f$. We say that such a machine works in polynomial (logarithmic) time, if the number of steps it takes before halting when given an input $x \in \Rinf{}$ is bounded by a polynomial (logarithmic) function in $\abs{x}$. Here, \Rinf{} denotes arbitrarily long \R-vectors (i.e., $\Rinf = \bigcup_{k \in \N_0}\R^k$) and $\abs{x}$ denotes the length of $x$,  i.e., if $x \in \R^k$ then $\abs{x} = k$.
 
A more formal definition of these machines can be found can also be found in the paper by Cucker and Meer \cite{DBLP:journals/jsyml/CuckerM99}.

\subsection{Arithmetic Circuits over \R}

Arithmetic circuits over \R{} were first introduced by Cucker \cite{DBLP:journals/jc/Cucker92} and are our main model of computation. We will define them in analogy to how they were defined by Cucker and Meer \cite{DBLP:journals/jsyml/CuckerM99}, however in this paper we consider unbounded fan-in.
Also, we disallow division and subtraction gates, since it is not clear, how these operations would be defined for unbounded fan-in.
Since it can be shown that for decision problems, losing (the bounded version of) those gate types does not change computational power within polynomial size, disallowing them does not relativize our results.

\begin{definition}
We define the \textit{sign} function and one variation as follows:\\
\hspace*{5pt}
\begin{minipage}{0.4\textwidth}
\begin{equation*}
\textit{sign}(x) \defeq \begin{cases} 
      1 & x > 0 \\
      0 & x = 0 \\
      -1 & x < 0
   \end{cases}
\end{equation*}
\end{minipage}\hfill
\begin{minipage}{0.4\textwidth}
\begin{equation*}
\mysign(x) \defeq \begin{cases} 
      1 & x \geq 0 \\
      0 & x < 0 \\
   \end{cases}
\end{equation*}
\end{minipage}\hspace{1cm}

\end{definition}
Since the functions $\mysign$ and \textit{sign} can be obtained from one another, given that $\mysign(x) = \textit{sign}(\textit{sign}(x) + 1)$ and $\textit{sign}(x) = \mysign(x) - \mysign(-x)$, we will use both freely whenever we have either one available.

\begin{definition}
\label{def_arith_circ}
An \new{arithmetic circuit} $C$ over \R{} is a directed acyclic graph. Its nodes (also called gates) can be of the following types:\medskip

\noindent\begin{tabularx}{\textwidth}{lX}
\new{Input nodes} & have {indegree} 0 and contain the respective input values of the circuit.\\
\new{Constant nodes} & have {indegree} 0 and are labelled with real numbers.\\
\new{Arithmetic nodes} & can have an arbitrary {indegree} only bounded by the number of nodes in the circuit. They can be labelled with either $+$ or $\times$.\\
\new{Sign nodes} & have {indegree} 1.\\
\new{Output nodes} & have {outdegree} 1 and contain the output values of the circuit after the computation.
\end{tabularx}\medskip

\end{definition}
Nodes cannot be predecessors of the same node more than once, which leads to the {outdegree} of nodes in these arithmetic circuits being bounded by the number of gates in the circuit.\bigskip

\noindent{}In order to later describe arithmetic circuits, we associate with each gate a number which represents its type.
For a gate $g$ these associations are as follows:
\begin{center}
\begin{tabular}{ccccccc}
\toprule
$g$ & input & constant & $+$ & $\times$ & sign & output \\
\midrule
type & 1 & 2 & 3 & 4 & 5 & 6\\
\bottomrule
\end{tabular}
\medskip
\end{center}
For convenience, we define auxiliary gates with types $7$--$12$ which do not grant us additional computational power as we show in Lemma~\ref{lem_aux_gates}. Those are arithmetic gates labelled with $-$ or the relation symbols $=$, $<$, $>$, $\leq$ and $\leq$. All of those nodes have {indegree} 2.

\begin{center}
\begin{tabular}{ccccccc}
\toprule
$g$ & $-$ & $=$& $<$ & $>$ & $\leq$ & $\geq$\\
\midrule
type & 7 & 8 & 9 & 10 & 11 & 12\\
\bottomrule
\end{tabular}
\end{center}
We will also refer to nodes of the types $8-12$ as \new{relational} nodes.


Arithmetic nodes compute the respective function they are labelled with (with $=, <, >, \leq$ and $\geq$ representing their respective binary characteristic functions) and sign gates compute the \textit{sign} function. 
On any input $x$, a circuit $C$ computes a function $f_C$ by evaluating all gates according to their labels. The values of the output gates at the end of the computation are the result of the computation of $C$, i.e., $f_C(x)$.

In order to talk about complexity classes of arithmetic circuits, one considers the \new{depth} and the \new{size} of the circuit. The depth of a circuit is the longest path from an input gate to an output gate and the size of a circuit is the number of gates in a circuit.
\begin{figure}
\begin{center}
\begin{tikzpicture}[
	scale=0.6,
	every node/.style={transform shape}, 
	base/.style={circle,draw,minimum size=30pt}, 
	circ/.style={minimum size=35pt},
	triangle/.style={regular polygon, regular polygon sides = 3, draw, inner sep=0, text width=15mm}
]
\tikzstyle{level 1}=[sibling distance=60mm]
\tikzstyle{level 2}=[sibling distance=40mm]
\tikzstyle{level 3}=[sibling distance=30mm]
\node[base] (out1) {$out$}
	child { node[base] (+1) {$+$} 
		child {node[missing] (emp11) {} edge from parent[draw=none] 
			child {node[base] (lt1) {$\times$} edge from parent[draw=none]
				child {node[base] (c61) {$6$}}
				child {node[base] (i11) {$in_1$}}
			}
		}
		child {node[base] (rt1) {$\times$}
			child{node[missing] (emp21) {} edge from parent[draw=none]
				child{node[base] (i21) {$in_2$} edge from parent[draw=none] }
			}		
		}
	}
	;
\draw [-] (+1) -- (lt1);
\draw [-] (rt1) -- (lt1);
\draw [-] (rt1) -- (i21);
\end{tikzpicture}
\end{center}
\caption{This circuit has size 7, depth 4 and computes the binary function \mbox{$f(x,y)=(6 \cdot x) + (6 \cdot x) \cdot y$}.}
\label{fig_circ_ex}
\end{figure}
\begin{definition}
We say that a directed acyclic graph $C_{sub}$ is a \new{subcircuit} of a circuit $C$, if and only if $C_{sub}$ is weakly connected (i.e. replacing all directed edges with undirected ones in $C_{sub}$ would produce a connected graph), all nodes and edges in $C_{sub}$ are also contained in $C$ and it holds that if there is a path from an input gate to a gate $g$ in $C$, then this path also exists in $C_{sub}$.
For any node $g$ in $C$, we denote by the subcircuit \new{induced} by $g$ that subcircuit $C_{sub,g}$ of $C$, of which $g$ is the top node. We then also say that $g$ is the \new{root node} of $C_{sub,g}$.
\end{definition}


A single circuit can only compute a function with a fixed number of arguments, which is why we call arithmetic circuits a \new{non-uniform} model of computation. In order to talk about arbitrary functions, we need to consider circuit families, i.e., sequences of circuits which contain one circuit for every input length $n \in \N$. The function computed by a circuit family $\C = (\Cn)_{n \in \N}$ is the function computed by the respective circuit, i.e.,
\begin{equation}
f_\C(x) = f_{\cn_{\abs{x}}}(x).
\end{equation}
A circuit family is said to decide a set if and only if it computes the characteristic function of the set. For a function $f \colon \N \to \N$, we say that a circuit family \C{} is of size $f$ (depth $f$), if the size (depth) of $\Cn$ is bounded by $f(n)$.

\begin{definition}
The class \ACO{} is the class of sets decidable by arithmetic circuit families over \R{} of polynomial size and constant depth.
\end{definition}

\begin{lemma}
\label{lem_aux_gates}
For any arithmetic circuit of polynomial size and constant depth which uses gates of the types $1 - 12$, there exists an arithmetic circuit of polynomial size and constant depth computing the same function, which only uses gates of the types $1-6$.
\end{lemma}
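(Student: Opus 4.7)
The plan is to exhibit, for each auxiliary gate of type $7$--$12$, a small gadget built only from gates of types $1$--$6$ that computes the same function, and then replace each auxiliary gate in the given circuit by its gadget. Since the original circuit has polynomial size $s(n)$ and constant depth $d$, and each gadget will have constant size and constant depth independent of $n$, the resulting circuit has size $O(s(n))$ and depth $O(d)$, hence remains of polynomial size and constant depth.

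For subtraction (type $7$) it suffices to set $a - b = a + (-1)\cdot b$, using a constant gate labelled $-1$, one product gate, and one sum gate. For the relational types I would first simulate the auxiliary function $\mysign$ via the identity already noted in the paper, $\mysign(x) = sign(sign(x)+1)$, which uses two sign gates, a constant gate $1$, and one addition. From $\mysign$ and the subtraction gadget just described, each relation's characteristic function is a short polynomial expression: $a = b$ is computed as $1 - sign(a-b)\cdot sign(a-b)$; $a \geq b$ as $\mysign(a-b)$; $a \leq b$ as $\mysign(b-a)$; $a > b$ as $\mysign(a-b)\cdot(1 - \mysign(b-a))$; and $a < b$ symmetrically as $\mysign(b-a)\cdot(1 - \mysign(a-b))$. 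A direct case analysis over the three sub-cases $a > b$, $a = b$, $a < b$ verifies each of these expressions, and by construction each gadget is a subcircuit of bounded depth and bounded size using only gates of types $1$--$6$.

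The argument is completed by the following bookkeeping, which I regard as the only place requiring care: since every auxiliary gate is replaced independently by a gadget of depth at most some fixed constant $c$, a path of length $d$ in the original circuit becomes a path of length at most $c\cdot d$ in the new circuit, so depth stays constant; similarly the total gate count grows by at most a constant factor, so size stays polynomial. I do not expect a genuine obstacle --- the subtlety deserving the most attention is ensuring that the relational gadgets output exactly the $\{0,1\}$-valued characteristic function also in the boundary case $a = b$, which is handled precisely by using $\mysign$ rather than $sign$ in the definitions of $\geq$, $\leq$, $>$ and $<$.
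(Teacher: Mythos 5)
Your proposal is correct and follows essentially the same route as the paper: replace each gate of type $7$--$12$ by a constant-size, constant-depth gadget over gates of types $1$--$6$ built from algebraic identities involving $sign$ (respectively $\mysign$) and $t_1+(-1)\times t_2$, and then observe that size grows by at most a constant factor and depth by at most a constant factor, so polynomial size and constant depth are preserved. The only difference is cosmetic: you use slightly different (and equally valid) identities for the relational gates, e.g.\ $1-sign(a-b)\cdot sign(a-b)$ for equality and $\mysign$ directly for $\leq$ and $\geq$, whereas the paper routes $\leq$ through a disjunction of $<$ and $=$ gates before eliminating those.
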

\begin{proof}
Let $C$ be an arithmetic circuit with $n$ input gates which uses gates of the types $1-12$, with $size(C) \leq n^q$ and $depth(C)=d$ for $q, d \in \N$. We will construct a circuit $C'$ of polynomial size and constant depth which computes the same function as $C$.
We start out by $C' = C$ and proceed as follows:
First, since we can represent $t_1 \leq t_2$ by 
\begin{equation}
t_1 \leq t_2 \equiv {t_1 < t_2 \lor t_1 = t_2}
\end{equation}
for all $t_1, t_2 \in \R$, we replace every $\leq$ gate in $C'$ by a sign gate, followed by an addition gate, which in turn has a $<$ gate and a $=$ gate as its predecessors. Those two gates then each have the nodes $p_1$ and $p_2$ as their predecessors.
The sign and addition gate at the top represent the $\lor$ in this construction. The overall increase in size is $3$ per $\leq$ gate, which leads to the overall increase in size being polynomial in the worst case. The increase in depth is at worst $2$ per gate on the longest path from an input gate to the output gate, which means that the overall increase in depth is constant. This means that $C'$ still computes the same function as $C$, its size is still polynomial and its depth is still constant in $n$ and $C'$ now does not contain any $\leq$ gates. For $\geq$ gates, we proceed analogously.
We continue similarly for the other cases: Since we can represent $t_1 = t_2$ by 
\begin{equation}
t_1 = t_2 \equiv {\mysign(-(t_1-t_2)^2)}
\end{equation}
for all $t_1, t_2 \in \R$, we replace every $=$ gate in $C'$ with predecessors $p_1$ and $p_2$ by a sign gate at the top, followed by an addition gate which in turn has a constant gate labeled $1$ and another sign gate as its predecessors. This construction represents $\mysign$. That second sign gate then has a subtraction gate as its predecessor, which has a constant node labeled $0$ and a $\times$ gate as its predecessors. The $\times$ gate has two $+$ gates as its predecessors, which in turn each have the same subtraction gate as their predecessor. That subtraction gate then has $p_1$ and $p_2$ as its predecessors. 
 Note here that the $+$ gates here essentially work as identity gates, and we only need them, to have the value of $(p_1 - p_2)$ be multiplied with itself in the $\times$ node. The overall increase in size per $=$ gate in this construction is $9$, which means that the total overhead in size is still polynomial in the worst case. In terms of depth, the increase is at worst $6$ per gate on the longest path from an input gate to the output gate, meaning that the total increase is still constant. After this step, $C'$ computes the same function as $C$, still has polynomial size and constant depth in $n$ and does not contain any $=$ gates.
The construction for $<$ gates with predecessors $p_1$ and $p_2$ works similarly. We make use of $t_1 < t_2$ being representable by 
\begin{equation}
t_1 < t_2 \equiv 1 - \mysign(t_1 - t_2)
\end{equation}
for all $t_1, t_2 \in \R$. We therefore replace every $<$ gate by a subtraction gate with $1$ and a construction for $\mysign$ as above as its predecessors. The $\mysign$ construction then has a subtraction gate as its predecessor, which in turn has the nodes $p_1$ and $p_2$ as its predecessors.
The increase in size per $<$ gate is $6$, leading to a polynomial increase at worst and the increase in depth is at worst $4$ per $<$ gate on the longest path from an input gate to the output, meaning that the overall overhead is constant. This means that $C'$ still has polynomial size and constant depth in $n$, still computes the same function as $C$ and now does not contain any $<$ gates. We proceed analogously for $>$ gates.
For subtraction gates, we proceed similarly, since we can represent $t_1 - t_2$ by 
\begin{equation}
t_1 - t_2 \equiv t_1 + (-1) \times t_2
\end{equation}
for all $t_1, t_2 \in \R$. We replace every subtraction gate with predecessors $p_1$ and $p_2$ by an addition gate with $p_1$ and a multiplication gate as its predecessors, where the multiplication gate has a constant node labeled $-1$ and the node $p_2$ as its predecessors. 
For each gate, this introduces an increase in size of $2$ per subtraction gate, leading to the overall overhead still being polynomial in the worst case, and an increase in depth of $1$ for each gate on the longest path from an input gate to the output gate, which leads to the overall depth still being constant. Therefore, $C'$ still computes the same function as $C$, has polynomial size and constant depth in $n$ and does not contain any subtraction gates.
In total, $C'$ has polynomial size in $n$, constant depth in $n$, only contains gates of the types $1-6$ and computes the same function as $C$. \qed
\end{proof}

%


The circuit families we have just introduced do not have any restrictions on the difficulty of obtaining any individual circuit. For this reason, we also consider so-called \new{uniform} circuit families.

\begin{definition}
\label{def_uniformity}
We say that a circuit family $\C = (\Cn)_{n \in \N}$ is uniform if for each of its circuit the gates are numbered, the predecessors of each gate are ordered and for any given triple of numbers $(n, v_{nr}, p_{idx})$, a corresponding triple 
$(t, p_{nr}, c)$ \label{par_uniformity}
can be computed by an \R-machine{} $M$, where 
\begin{enumerate}[i)]
\item $t$ is the type of the $v_{nr}$th gate $v$ in $\Cn$,
\item $p_{nr}$ is the number of the $p_{idx}$th predecessor of $v$ and
\item $c$ is the value of $v$ if $v$ is a constant gate, the index $i$ if $v$ is the $i$th input gate and $0$ otherwise.
\end{enumerate}

If $v$ has less than $p_{idx}$ predecessors, $M$ returns $(t, 0, 0)$ and if $v_{nr}$ does not encode  a gate in $\Cn$, $M$ returns $(0, 0, 0)$. 

If this computation only takes logarithmic time in $n$, we call \C{} \ltr{}-uniform. If it takes polynomial time in $n$, we call \C{} \ptr{}-uniform.

For a circuit complexity class \circclass, we will by \unil-\circclass{} denote the subclass of \circclass{}, which only contains sets definable by \ltr{}-uniform circuit families. We will use \unip-\circclass{} to analogously denote those sets in \circclass{} definable by \ptr{}-uniform families.
\end{definition}

\subsection{\R-structures and First-order Logic over \R}
\label{sec_logic}

The logics we use to characterize real circuit complexity classes are based on first-order logic with arithmetics. 

\begin{definition}[{\cite[Definition 7]{DBLP:journals/jsyml/CuckerM99}}]
\label{def_R_structure}
Let $L_s$, $L_f$ be finite vocabularies where $L_s$ can contain function and predicate symbols and $L_f$ only contains function symbols. An \new{\R -structure of signature $\sigma = (L_s, L_f)$} is a pair $\struc{D} = (\struc{A}, \struc{F})$ where 
\begin{enumerate}
\item \struc{A} is a finite structure of vocabulary $L_s$ which we call the \new{skeleton} of \struc{D} whose universe $A$ we will refer to as the \new{universe} of \struc{D} and whose cardinality we will refer to by $\abs{A}$ 
\item and \struc{F} is a finite set which contains functions of the form $X \colon A^k \rightarrow \R$ for $k \in \N$ which interpret the function symbols in $L_f$.
\end{enumerate}

We will use \new{$\structR(\sigma)$} to refer to the set of all \R-structures of signature $\sigma$ and we will assume that for any fixed signature $\sigma = (L_s, L_f)$, we can fix an ordering on the symbols in $L_s$ and $L_f$.
\end{definition}



In order to use \R-structures as inputs for machines, we encode them in $\R^\infty$ as follows: We start by choosing an arbitrary ranking $r$ on $A$, i.e., a bijection ${r \colon A \to \{0, ..., \abs{A}-1\}}$. We then replace all predicates in $L_s$ by their respective characteristic functions and all functions $f \in L_s$ by $r \circ f$. Those functions are then considered to be elements of $L_f$. We represent each of these functions by concatenating their function values in lexicographical ordering on the respective function arguments according to $r$. To encode \struc{D} we only need to concatenate all representations of functions in $L_f$ in the order fixed on the signature. We denote this encoding by \enc(\struc{D}).

In order to be able to compute $\abs{A}$ from \enc(\struc{D}), we make an exception for functions and predicates of arity $0$. We treat those as if they had arity $1$, meaning that e.g. we encode a function $f_1() = 3$ as $\abs{A}$ many $3$s.

\noindent{}Since
\begin{equation}
\abs{\enc(\struc{D})} = \sum\limits_{f \in L_f}\abs{A}^{\max\{ar(f), 1\}},
\end{equation}
\label{par_enc_rec}
where $ar(f)$ is the arity of $f$, we can reconstruct $\abs{A}$ from the arities of the functions in $L_f$ and the length of $\enc(\struc{D})$. We can do so by using for example binary search, since we know that $\abs{A}$ is between $0$ and $\abs{\enc(\struc{D})}$. We can therefore compute $\abs{A}$ when given $\varphi$ and $\abs{\enc(\struc{D})}$ in time logarithmic in $\abs{\enc(\struc{D})}$.

\subsubsection{First-order Logic over \R}\leavevmode

\begin{definition}[First-order logic]
The language of first-order logic contains for each signature $\sigma = (L_s, L_f)$ a set of formulas and terms. The terms are divided into \new{index terms} which take values in universe of the skeleton and \new{number terms} which take values in \R. These terms are inductively defined as follows:
\begin{enumerate}
\item The set of index terms is defined as the closure of the set of variables $\Vars$ under applications of the function symbols of $L_s$.
\item Any real number is a number term.
\item For index terms $h_1, ..., h_k$ and a $k$-ary function symbol $X \in L_f$, $X(h_1, ..., h_k)$ is a number term.
\item If $t_1$, $t_2$ are number terms, then so are $t_1 + t_2$, $t_1 \times t_2$ and $\textit{sign}(t_1)$.
\end{enumerate}

Atomic formulas are equalities of index terms $h_1 \feq h_2$ and number terms $t_1 \feq t_2$, inequalities of number terms $t_1 < t_2$ and expressions of the form $P(h_1, ..., h_k)$, where $P \in L_s$ is a k-ary predicate symbol and $h_1, .., h_k$ are index terms.

The set \FO{} is the smallest set which contains the closure of atomic formulas under the Boolean connectives $\{\land, \lor, \neg, \imp, \biimp\}$ and quantification $\exists v \psi$ and $\forall v \psi$ where $v$ ranges over \struc{A}. 
\end{definition}

Equivalence of \FO{} formulas and sets defined by \FO{} formulas are done in the usual way, i.e., a formula $\varphi$ defines a set $S$ if and only if the elements of $S$ are exactly the encodings of \R-structures under which $\varphi$ holds and two such formulas are said to be equivalent if and only if they define the same set.

\subsubsection{Extensions to \texorpdfstring{\FO{}}{FO}}\leavevmode

In the following, we would like to extend \FO{} by additional functions and relations that are not given in the input structure. To that end, we make a small addition to Definition~\ref{def_R_structure} where we defined \R-structures. Whenever we talk about \R-structures over a signature $(L_s, L_f)$, we now also consider structures over signatures of the form $(L_s, L_f, L_a)$. The additional vocabulary $L_a$ does not have any effect on the \R-structure, but it contains function and relation symbols, which can be used in a logical formula with this signature. This means that any \R-structure of signature $(L_s, L_f)$ is also an \R-structure of signature $(L_s, L_f, L_a)$  for any vocabulary $L_a$.

\begin{definition}


Let $R$ be a set of finite relations and functions. We will write \FO$[R]$ to denote the class of sets that can be defined by \FO-sentences which can make use of the functions and relations in $R$ in addition to what they are given in their structure. 
Formally, this means that \FO$[R]$ describes exactly those sets $S \subseteq \Rinf$ for which 
	there exists an \FO{}-sentence $\varphi$ over a signature ${\sigma = (L_s, L_f, L_a)}$ such that for each length $n$,
		there is an interpretation $I_n$ interpreting the symbols in $L_a$ as elements of $R$ such that 
				for all \Rinf-tuples $s$ of length $n$ it holds that
					$s \in S$ if and only if $s$ encodes an \R-structure over $(L_s, L_f, L_a)$ which models $\varphi$ when using $I_n$.

\end{definition}

With the goal in mind to create a logic which can define sets decided by circuits with unbounded {fan-in}, we introduce new rules for building number terms: the \textit{sum} and the \textit{product rule}. We will also give another rule, which we call the \textit{maximization rule}, but will later show that we can define this rule in \FO{} and thus do not gain expressive power by using it. We will use this rule to show that we can represent characteristic functions in \FO{}.

\begin{definition}[sum, product and maximization rule]
Let $t$ be a number term in which the variable $i$ occurs freely with other variables $\ow = w_1, ..., w_j$ and let $A$ denote the universe of the given input structure. Then 
\begin{equation}
\sumi(t(i, \ow))
\end{equation}
is also a number term which is interpreted as $\sum_{i \in A}t(i, \ow)$.
The number terms $\prodi(t(i, \ow))$ and $\maxi(t(i, \ow))$ are defined analogously.

We also write $\sumiq(t(i_1, ..., i_q, \ow))$ to denote $\sumi[i_1](... \sumi[i_q](t(i_1, ..., i_q, \ow)))$ for convenience and we will use \prodiq{} analogously.
\end{definition}

For a logic $\mathcal{L}$, we will by $\mathcal{L} + \SUM$, $\mathcal{L} + \PROD$ and $\mathcal{L} + \MAX$ denote $\mathcal{L}$ extended by the sum rule, the product rule or the maximization rule respectively.

We will now evaluate which logics can already natively use some of the aforementioned rules. As it turns out, the maximization rule can be used in \FO{} without any extensions and the sum and product rule extend neither \FOArb{} nor a polynomial extension of \FO{} which we will see later.

\begin{lemma}
\label{lem_fo_max}
$\FO = \FO + \MAX$
\end{lemma}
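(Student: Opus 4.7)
The inclusion $\FO \subseteq \FO + \MAX$ is immediate. For the converse, my plan is to eliminate $\maxi$-applications from any $\FO + \MAX$-sentence $\varphi$ one at a time by Skolemizing each one as an existential over the skeleton universe $A$. This is sound because $A$ is finite and nonempty, so the maximum of $t(i, \ow)$ over $i \in A$ is attained at some witness $i_0 \in A$, which can be chosen by a standard first-order $\exists$-quantifier ranging over the skeleton, together with a $\forall$-clause asserting maximality.

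The proof would proceed by induction on the number of $\max$-applications occurring in $\varphi$. For the inductive step, I would locate an innermost application $\maxi t(i, \ow)$, meaning one whose argument $t$ is itself $\max$-free, and then pick any atomic subformula $\alpha$ of $\varphi$ in which this term occurs. I would then replace $\alpha$ inside $\varphi$ by
\begin{equation*}
\exists i_0 \Bigl( \bigl( \forall j \;\neg\,(t(i_0, \ow) < t(j, \ow)) \bigr) \land \alpha' \Bigr),
\end{equation*}
where $i_0$ and $j$ are fresh index variables and $\alpha'$ is obtained from $\alpha$ by replacing every syntactic occurrence of the number term $\maxi t(i, \ow)$ by $t(i_0, \ow)$. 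Since $t$ is $\max$-free, $\alpha'$ contains strictly fewer $\max$-applications than $\alpha$, so iterating this rewrite terminates in a pure $\FO$-sentence.

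For correctness of a single rewrite step, fix any input structure and any assignment to the free variables $\ow$, and let $m = \maxi t(i, \ow)$. The set $\{i_0 \in A \mid t(i_0, \ow) = m\}$ is exactly the set of $i_0 \in A$ satisfying the maximality clause $\forall j \;\neg\,(t(i_0, \ow) < t(j, \ow))$, and it is nonempty because $A$ is finite and nonempty. For any such witness $i_0$, substituting $t(i_0, \ow)$ (which has value $m$) for $\maxi t(i, \ow)$ inside $\alpha$ does not change its truth value, so $\alpha'$ and $\alpha$ agree; conversely, any $i_0$ satisfying the maximality clause must realize the value $m$, so the existential and the original atomic formula have the same truth value.

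I do not anticipate a serious obstacle here; the argument is essentially Skolemization of the witness of a max. The technical points to get right are (i) using genuinely fresh variables $i_0, j$ to avoid capture when the surrounding formula mentions other index variables, (ii) reusing the \emph{same} witness $i_0$ for every syntactic occurrence of the \emph{same} term $\maxi t(i, \ow)$ inside $\alpha$, which is legitimate because equal syntactic terms denote equal values, and (iii) the implicit reliance on nonemptiness of the skeleton $A$, which is a standard convention in the metafinite framework of Gr\"adel, Meer, Cucker.
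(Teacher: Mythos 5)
Your overall strategy is the same as the paper's: Skolemize each $\maxi$-application by an existential witness ranging over the skeleton together with a universally quantified maximality clause. For formulas in which no $\max$-term is nested inside another, your rewrite is correct, and your three technical caveats (freshness, one witness per syntactic term, nonemptiness of $A$) are exactly the right ones. The genuine gap is the nested case, which your innermost-first order does not handle. Consider an atomic formula containing $\max_{i_1}\bigl(F(i_1,\max_{i_2}G(i_1,i_2))\bigr)$. The innermost application is $\max_{i_2}G(i_1,i_2)$, whose body is $\max$-free, so your step fires on it; but its value depends on $i_1$, which is bound by the enclosing $\max_{i_1}$-term rather than by a first-order quantifier. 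Pulling a single $\exists i_0$ out to the level of the atomic formula produces a maximality clause $\forall j\,\neg\,(G(i_1,i_0)<G(i_1,j))$ in which $i_1$ occurs outside the scope of anything binding it and, more importantly, forces one witness $i_0$ to serve simultaneously for every value of $i_1$ ranged over by the outer maximum. Such a uniform witness need not exist, so the rewritten formula is not equivalent to the original. Nested, dependent maxima are precisely what arise in the intended application of this lemma (characteristic functions of formulas with several quantifiers), so the case cannot be dismissed.

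The repair is to reverse the order: eliminate the outermost application first, so that by the time you treat $\max_{i_2}G(i_1,i_2)$ the variable $i_1$ has been replaced by witness variables bound by genuine first-order quantifiers, and the new $\exists i_0\forall j$ can be placed inside their scope. This is in effect what the paper does by handling all occurrences at once with an interleaved prefix $\exists x_1\forall y_1\cdots\exists x_k\forall y_k$ ordered by nesting depth, substituting the outer witnesses $x_{m_1},\dots,x_{m_j}$ into each inner maximality clause so that deeper witnesses may depend on shallower ones. (Even in that version one must check that the inner maxima occurring on the $\forall y_m$-side of a maximality clause are correctly re-witnessed; this is the delicate point of the whole argument and deserves an explicit sentence in any write-up.)
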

\begin{proof}
For each $\FO + \MAX$ formula, we can construct an equivalent \FO formula. For each such term containing $\maxi(F(i))$, the basic idea it to add a quantifier prefix which makes sure that there exists an element $x \in A$ such that for all elements $y \in A$, $F(x) \geq F(y)$.

Let $\varphi$ be a \FO{} formula which contains $\maxi$-constructions, i.e., number terms of the form $\maxi(t(i, \ow))$ for a number term $t$. We will show that for every such formula, we can construct another \FO formula $\varphi'$ which is equivalent to $\varphi$ but which does not contain the term $\maxi(t(i, \ow))$. Since $\maxi$-constructions are number terms, whenever they occur, they are part of atomic (sub-)formulas. For this reason, we only need to show, how to turn atomic formulas with $\maxi$-constructions into semantically equivalent formulas (that are not necessarily atomic anymore).
For a given atomic formula with $\maxi$-constructions $\varphi$, define $\varphi'$ as follows:
\noindent Let $\varphi \fassign t_1 \feq t_2$ and let $\maxi[i_1], ..., \maxi[i_k]$ be the $\maxi$-occurrences of $\varphi$, ordered by level of nesting, where $\maxi[i_1]$ has the lowest level of nesting, the nesting of $\maxi[i_2]$ is either the same as $\maxi[i_1]$ or greater by $1$ and so on. We assume without loss of generality that the variables $x_1, ..., x_k$ and $y_1, ..., y_k$ do not occur in $\varphi$. We also assume for now that there is only one occurrence of $\maxi$ at the lowest level of nesting and that $t_1$ consists only of that outermost $\maxi$-construction, i.e., $t_1 \fassign \maxi[i_1](F_1(i_1, \ow_1))$. To now construct $\varphi'$, we go through the $\maxi$-occurrences in $\varphi$ in reverse order of nesting, i.e., from the deepest level to the shallowest, and for each occurrence $\maxi[i_m](F_m(i_m, \ow_m))$, we create a subformula $\psi_m$, which ensures that $F_m$ is being maximized with respect to $i_m$. We will use new variables $x_1, ..., x_k, y_1, ..., y_k$ in the subformulas, which will be quantified later, when we connect those subformulas to construct $\varphi'$. $\varphi'$ will then have the form 
\begin{equation}
\varphi' \fassign \exists x_1 \forall y_1 ... \exists x_k \forall x_k \psi_k \land ... \land \psi_1 \land \widehat{\varphi},
\end{equation}
where $\widehat{\varphi}$ represents the structure of $\varphi$ without any $\maxi$-constructions. In our case, $\widehat{\varphi}$ would just be $F_1(x_1, \ow_1) \feq t_2$.
	
We start with the term $\maxi[i_k](F_k(i_k, i_{k_1}, ..., i_{k_j}, \ow_k))$, where $F_k$ is the number term in $\varphi$ getting maximized by $\maxi[i_k]$, $i_{k_1}, ..., i_{k_j}$ are the variables used in $F_k$ from $\maxi$-constructions which occur at lower levels of nesting in $\varphi$ and $\ow_k$ are all other variables used in $F_k$. 
	
We now create the subformula 
\begin{equation}
\psi_{i_k} \fassign F_k(x_k, x_{k_1}, ..., x_{k_j}, \ow_k) \geq F_k(y_k, x_{k_1}, ..., x_{k_j}, \ow_k),
\end{equation}
	
\noindent{}which makes sure that $F_k$ is maximal with respect to $i_k$.
	
Afterwards, we proceed in reverse order of nesting with the other $\maxi$-occurrences in $\varphi$ (meaning that $\maxi[i_{k-1}]$ is next) and create the subformulas $\psi_{k-1}, ..., \psi_1$ similarly. For $m \in (k-1, ..., 1)$, we proceed as follows:
	
Let $\maxi[i_m](F_m(i_m, i_{m_1}, ..., i_{m_j}, \ow_m))$ be the occurrence of $\maxi[i_m]$ in $\varphi$ with analogous $F_m, i_m, i_{m_1}, ..., i_{m_j}, \ow_m$ as before. Now replace all $\maxi$-constructions $\maxi(F_i(i, \ow))$ in $F_m$ -- where $\ow$ are all variables used in $F_i$ except for $i$ -- by parentheses around $F_i$, i.e., $\maxi(F_i(i, \ow))$ would just become $(F_i(i, \ow))$. Denote the result by $F_m'$. We then define 
\begin{equation}
\psi_m \fassign F_m'(x_m, x_{m_1}, ..., x_{m_j}, \ow_m) \geq F_m'(y_m, x_{m_1}, ..., x_{m_j}, \ow_m).
\end{equation}	
Finally, we define
\begin{equation}
\label{lem_maxi_phi'_full} 
\varphi' \fassign \exists x_1 \forall y_1 ... \exists x_k \forall y_k ~ \psi_k \land ... \land \psi_1 \land F_1'(x_1, \ow_1) \feq t_2.
\end{equation}
This construction now works for our strong assumption that $t_1 \fassign \maxi[i_1](F_{1})$. However, we only require the following modifications to make it generally applicable: If $\varphi$ contains only one $\maxi$-construction at the lowest level, but then operates on that construction, we can just add the context of that $\maxi$-construction to the term $F_1'$ in $\varphi'$. For example if $\varphi \fassign 7 \feq \maxi(F(i)) + 1$, then we could just add the '$+1$' to the $F_1'(x_1, \ow_1)$ in Formula~\ref{lem_maxi_phi'_full}. If $\varphi$ contains several $\maxi$-constructions at the lowest level of nesting, then we can construct as we have previously and just add the subformulae to the conjunction in $\varphi'$.
	
$\varphi'$ now does not contain any $\maxi$-constructions and is therefore a valid \FO{} formula. Since for every $\maxi$-occurrence in $\varphi$, there is a subformula in the conjunction of $\varphi'$ making sure that the term maximized by $\maxi$ in $\varphi$ is also maximal in $\varphi'$, $\varphi'$ is also semantically equivalent to $\varphi$.
	
We can construct $\varphi'$ analogously, if both, $t_1$ and $t_2$ contain $\maxi$-constructions or if $\varphi \fassign t_1 < t_2$. We have therefore shown that for any \FO{} formula with $\maxi$-constructions, there exists a semantically equivalent formula which does not contain any such constructions. \qed

\end{proof}



\begin{remark}
For the sake of simplicity we only consider \new{functional \R-structures} in the following, i.e., \R-structures whose signatures do not contain any predicate symbols. This does not restrict what we can express, since any relation $P \in A^k$ can be replaced by its characteristic function $\chi_P \colon A^k \to \{0,1\}$.
\end{remark}

As mentioned before, the reason why we need the maximization rule is that we would like to write characteristic functions as number terms. 
This will become useful when we characterize our circuit models logically.
For a first-order formula $\varphi(v_1,...,v_r)$ we define its characteristic function $\chi[\varphi]$ on a structure \struc{D} by
\begin{equation}
\chi[\varphi](a_1, ..., a_r) = \begin{cases}
1 & \text{if } \struc{D} \models \varphi(a_1, ..., a_r)\\
0 & \text{otherwise}
\end{cases}
\end{equation}

The following result is a slight modification of a result presented by Cucker and Meer \cite{DBLP:journals/jsyml/CuckerM99}. 

\begin{proposition}[\cite{DBLP:journals/jsyml/CuckerM99}]\label{prop_char_func}
Let $R$ be a set of functions and predicates. For every $\FO[R]$ formula $\varphi$, there is an $\FO[R]$ number term $t_{\chi[\varphi]}$ such that for all structures $\struc{D}$ it holds that $t_{\chi[\varphi]}$, when occurring in another formula, evaluates to $1$ under $\struc{D}$ if $\struc{D} \models \varphi$ and to $0$, otherwise. 
\end{proposition}
\begin{proof}
We will prove this proposition by induction on the construction of $\varphi$. If $\varphi$ is atomic, then it is of the form  $t_1 \feq t_2$, $t_1 < t_2$ for number terms $t_1, t_2$, since we only consider functional \R-structures. For atomic formulas, we have
\begin{equation}
\chi[t_1 \feq t_2] = \mysign[-(t_1 - t_2)^2]
\end{equation}
and
\begin{equation}
\chi[t_1 < t_2] = 1 - [\mysign(t_1 - t_2)].
\end{equation}
If $\varphi$ is of the form $\varphi = \exists x \psi(x)$, then
\begin{equation}
\chi[\varphi] = \maxi[x] \chi[\psi(x)].
\end{equation}
If $\varphi$ has the form $\varphi = \neg \psi$, then
\begin{equation}
\chi[\varphi] = 1 - \chi[\psi]
\end{equation}
and if $\varphi = \psi \land \xi$, then
\begin{equation}
\chi[\varphi] = \chi[\psi] \times \chi[\xi].
\end{equation}
Since $\varphi = \forall x \psi(x)$ and the remaining Boolean connectives can be constructed from the above, we have now shown that we can describe $\chi[\varphi]$ in $\FOR[R]$ for any $\varphi \in \FOR[R]$. 
\qed
\end{proof}

We will write $\chi[\varphi]$ to denote the use of $t_{\chi[\varphi]}$ when writing number terms.

\begin{remark}
The restriction in Proposition~\ref{prop_char_func} that we can only define those number terms when occurring in $\FO$ formulas stems from the way we showed that $\FO + \MAX = \FO$ in Lemma~\ref{lem_fo_max} and is for our intents and purposes the most part negligible.
However, later on, when talking about $\FO$-uniformity, we would like to be able to define number terms even occurring outside of formulas.
(Particularly for specifying $\varphi_\mathrm{const\_val}$ in the proof of Theorem~\ref{thm_UFO_AC0})
We therefore add the following corollary.
\end{remark}

\begin{corollary}
Let $R$ be a set of functions and predicates. For every $\FO[R]+\SUM+\PROD$ formula $\varphi$, there is an $\FO[R]+\SUM+\PROD$ number term $t_{\chi[\varphi]}$ such that for all structures $\struc{D}$ it holds that $t_{\chi[\varphi]}$ evaluates to $1$ under $\struc{D}$ if $\struc{D} \models \varphi$ and to $0$, otherwise. 
\end{corollary}
\begin{proof}
This proof works identically to the proof for Proposition~\ref{prop_char_func} except for the case $\chi[\exists x \psi(x)]$.
In that case, we now define
\begin{equation}
\chi[\varphi] = \textit{sign}(\sumi[x](\chi[\psi(x)])).
\end{equation}\qed
\end{proof}

\subsection{Logical Uniformity}

Since circuits and logic seem to be closely related, there is another type of uniformity, different to the one defined in Definition~\ref{def_uniformity}, that we would like to have a look at.
First-order uniform circuit complexity classes are those classes, of which the respective circuit families can be described by first-order formulas and terms. 
This kind of uniformity was introduced to Boolean circuit complexity by \cite{DBLP:journals/jcss/BarringtonIS90} and we are now going to map that concept to our circuit classes over the reals.

\begin{definition}
Let the vocabulary of arithmetic circuits \tal{} be defined as follows:
\[
\tal \coloneqq ((\varphi_+^1, \varphi_\times^1, \varphi_{\mathrm{sign}}^1, \varphi_{\mathrm{input}}^2, \varphi_\mathrm{E}^2, \varphi_{\mathrm{output}}^1, \varphi_{\mathrm{const}}^1 ), ( \varphi_{\mathrm{const\_val}}^1 ))
\]

%
\end{definition}

\begin{definition}
Let the vocabulary of \R-tuples \ttup be defined as follows:
\[
\ttup \coloneqq (( \leq^2 ), ( f_{\text{element}}^1 ))
\]
\end{definition}

\begin{definition}
Let $\sigma_s, \tau_s$ be vocabularies of relation symbols and function symbols and let $\sigma_f, \tau_f$ vocabularies of function symbols.
Additionally, let $\tau_s = (g_1^{ar(g_1)}, \ldots, g_p^{ar(g_p)}, h_1^{ar(h_1)}, \ldots, h_q^{ar(h_q)})$ and $\tau_f = (f_1^{ar(f_1)}, \ldots, f_r^{ar(f_r)})$, where $ar(f)$ denotes the arity of $f$ for all function and relation symbols $f$ and let $k \in \N$.

A \emph{real first-order interpretation} (\emph{\FOR-interpretation})
\[
I \colon \structR[(\sigma_s, \sigma_f)] \to \structR[(\tau_s, \tau_f)]
\]
is given by a tuple of \FOR{} formulae $\varphi_0, \varphi_1 \ldots, \varphi_p$, \FOR{} index terms $a_1, \ldots, a_q$ and \FOR{} number terms $t_1, \ldots, t_r$ over $(\sigma_s, \sigma_f)$. 
The formula $\varphi_0$ has $k$ free variables, $\varphi_i$ has $k \cdot ar(g_i)$ free variables for $1 \leq i \leq p$, $a_i$ has $k \cdot ar(h_i)$ free variables for $1 \leq i \leq q$ and $t_i$ has $k \cdot ar(f_i)$ free variables for $1 \leq i \leq r$.

For each structure $\mathcal{A} \in \structR[(\sigma_s, \sigma_f)]$, these terms and formulae define the structure
\begin{align*}
I(\mathcal{A}) &= ~(\abs{I(\mathcal{A})}, g_1^{I(\mathcal{A})}, \ldots, g_p^{I(\mathcal{A})}, h_1^{I(\mathcal{A})}, \ldots, h_p^{I(\mathcal{A})}, f_1^{I(\mathcal{A})}, \ldots, f_r^{I(\mathcal{A})}) \\
& \in \structR[(\tau_s, \tau_f)],
\end{align*}
where the universe is defined by $\varphi_0$ and the functions and relations are defined by $\varphi_1, \ldots, \varphi_p, a_1, \ldots, a_q, t_1, \ldots, t_r$ in the following way:
\begin{align*}
\abs{I(\mathcal{A})} =~ & \{(b^1, \ldots, b^k) \mid \mathcal{A} \models \varphi_0(b^1, \ldots, b^k)\}\\
g_i^{I(\mathcal{A})} =~ & \{(\ol{b_1}, \ldots, \ol{b_{ar(g_i)}}) \in \abs{I(\mathcal{A})}^{ar(g_i)} \mid \mathcal{A} \models \varphi_i(\ol{b_1}, \ldots, \ol{b_{ar(g_i)}}) \}\\
h_i^{I(\mathcal{A})}(\ol{b_1}, \ldots, \ol{b_{ar(h_i)}}) =~ & \ol{b}, \text{ iff } a_i(\ol{b_1}, \ldots, \ol{b_{ar(h_i)}}) \models_\mathcal{A} \ol{b}\\
f_i^{I(\mathcal{A})}(\ol{b_1}, \ldots, \ol{b_{ar(f_i)}}) =~ & \ol{b}, \text{ iff } t_i(\ol{b_1}, \ldots, \ol{b_{ar(f_i)}}) \models_\mathcal{A} \ol{b},
\end{align*}
where $a(x_1, \dots x_{ar(a)}) \models_\mathcal{A} b$ means that the term $a$ when interpreted with the structure $\mathcal{A}$ evaluates to $b$.

\end{definition}

\begin{definition}
An \R-circuit family $\mathcal{C}=(C_n)_{n \in \N}$ is said to be \FOR-uniform if there is an \FOR-interpretation 
\[
I \colon \structR[\ttup] \to \structR[\tal]
\]
mapping any $R$-structure $\mathcal{A}$ over $\ttup$ to the circuit $C_{\abs{\enc(\mathcal{A})}}$ given as a structure of vocabulary \tal{}. 
This means that the symbols of $\ttup$ are interpreted as follows:

\begin{itemize}
\item $x \leq y$: $x$ is ranked lower than $y$ in $\mathcal{A}$
\item $f_\mathrm{element}(x) = y \in \R$: the $\rk(x)$th value in the encoding of $\mathcal{A}$ is $y$
\end{itemize}

and the symbols of $\tal$ are interpreted in the following way:

\begin{itemize}
\item $\varphi_0(\ol{x})$: $\ol{x}$ is a gate.	
\item $\varphi_+(\ol{x})$: $\ol{x}$ is an addition gate.
\item $\varphi_\times(\ol{x})$: $\ol{x}$ is a multiplication gate.
\item $\varphi_\mathrm{sign}(\ol{x})$: $\ol{x}$ is a sign gate.
\item $\varphi_+(\ol{x}, \ol{i})$: $\ol{i} = (j, \dots, j)$ and $\ol{x}$ is the $\rk(j)$th input gate, where $\rk$ is the ranking of $\mathcal{A}$.
\item $\varphi_E(\ol{x}, \ol{y})$: $\ol{y}$ is a successor gate of $\ol{x}$.
\item $\varphi_\mathrm{output}(\ol{x})$: $\ol{x}$ is the output gate.
\item $\varphi_\mathrm{const}(\ol{x})$: $\ol{x}$ is a constant gate.
\item $\varphi_\mathrm{const\_val}(\ol{x}) = y \in \R$: $y$ is the value of $\ol{x}$ if $\ol{x}$ is a constant gate and $y = 0$,
\item[] otherwise.
\end{itemize}
\end{definition}

\begin{definition}
Let $\mathfrak{C}$ be a complexity class defined by a non-uniform circuit families over \R.
Then $\funi{}$-$\mathfrak{C}$ consists of all languages in $\mathfrak{C}$ which are defined by $\FOR${}-uniform \R-circuit families.
\end{definition}


\section{A Characterization for Non-Uniform \texorpdfstring{\ACO}{AC0}}
\label{main-nu}

In the upcoming sections, we give descriptive complexity results for the non-uniform set \ACO{} and some of its uniform subsets. 
In order to achieve this, we use the previously defined first-order logic over the real numbers and the extensions we defined. 

First of all we show an equality which is close to a classical result shown by Immermann \cite{Immerman87languagesthat}. 
We show that extending our first-order logic over the reals with arbitrary functions lets us exactly describe the non-uniform set \ACO{}.\medskip

In the proof for the upcoming theorem, we make use of a convenient property of circuits deciding \ACO{}-sets, namely that for each of those circuits, there exist \textit{tree-like} circuits deciding the same set.  
We call a circuit tree-like, if it is a directed tree with the exception of the input nodes.
Those nodes, which would represent the leaves, can have multiple successor nodes. 
That means that tree-like circuits are trees up until the penultimate level and would be actual trees, if one would copy every input gate for each outgoing edge, rather than letting them have multiple successors.

\begin{lemma}
\label{lem_treelike}
For every \ACO-circuit family $(\Cn)_{n \in \N}$, there exists a tree-like \ACO-circuit family $(\cn'_n)_{n \in \N}$ computing the same function, such that for all $n \in \N$ and every gate $v$ in $\cn'_n$, every path from an input gate to $v$ has the same length.
\end{lemma}
\begin{proof}
In order to prove this we show that any \ACO-family can be transformed into an \ACO-family which exhibits the specified property. 
For any given circuit of a \ACO-family, we first make sure that all non-input gates have outdegree $1$. 
In order to achieve this, for each gate $g$ with outdegree $k > 1$ we copy the subcircuit $C_{sub,g}$ induced by $g$ $k - 1$ times, so that we now have $k$ copies of $C_{sub,g}$. 
For each of the previously outgoing edges $g \to v$ of $g$, the root of one of the copies of $C_{sub,g}$ then has $v$ as its (sole) successor.

We do this iteratively, in each step only modifying gates with outdegree $\geq 2$ that are closest to input gates. 
Afterwards, we pad all paths from input gates to the output gate with addition gates to ensure that they have the same length. 
This can be done with only a polynomial overhead in size and a constant overhead in depth without changing the computed function. 
Figure~\ref{fig_treelike_ex} shows an example of this construction.
\qed
\end{proof}

\begin{figure}
\begin{center}
\resizebox{\textwidth}{!}{\begin{tikzpicture}[
	scale=0.6,
	every node/.style={transform shape}, 
	base/.style={circle,draw,minimum size=30pt}, 
	circ/.style={minimum size=35pt},
	triangle/.style={regular polygon, regular polygon sides = 3, draw, inner sep=0, text width=15mm}
]
\tikzstyle{level 1}=[sibling distance=60mm]
\tikzstyle{level 2}=[sibling distance=40mm]
\tikzstyle{level 3}=[sibling distance=30mm]
\node[base] (out1) {$out$}
	child { node[base] (+1) {$+$} 
		child {node[missing] (emp11) {} edge from parent[draw=none] 
			child {node[base] (lt1) {$\times$} edge from parent[draw=none]
				child {node[base] (c61) {$6$}}
				child {node[base] (i11) {$in_1$}}
			}
		}
		child {node[base] (rt1) {$\times$}
			child{node[missing] (emp21) {} edge from parent[draw=none]
				child{node[base] (i21) {$in_2$} edge from parent[draw=none] }
			}		
		}
	}
	;
\draw [-] (+1) -- (lt1);
\draw [-] (rt1) -- (lt1);
\draw [-] (rt1) -- (i21);
\tikzstyle{level 3}=[sibling distance=25mm]
\node[base, right=7cm of out1] (out2) {$out$}
	child { node[base] (+2) {$+$} 
		child {node[missing] (emp12) {} edge from parent[draw=none] 
			child {node[base] (lt2) {$\times$} edge from parent[draw=none]
				child {node[base] (c62) {$6$}}		
				child {node[base] (ec62) {$6$} edge from parent[draw=none]}		
			}
			child {node[base] (elt2) {$\times$} edge from parent[draw=none]
				child[missing]
				child {node[base] (i12) {$in_1$}}
			}
		}
		child {node[base] (rt2) {$\times$}
			child{node[missing] (emp22) {} edge from parent[draw=none]
				child{node[base] (i22) {$in_2$} edge from parent[draw=none] }
			}		
		}
	}
	;
\draw [-] (+2) -- (lt2);
\draw [-] (rt2) -- (elt2);
\draw [-] (rt2) -- (i22);
\draw [-] (i12) -- (lt2);
\draw [-] (ec62) -- (elt2);
\node[base, right=6cm of out2] (out3) {$out$}
	child { node[base] (+3) {$+$} 
		child {node[base] (lt3) {$\times$}
			child {node[base] (c63) {$6$}}
			child {node[base] (i+1) {$+$} 
				child {node[missing] (i13) {} edge from parent[draw=none]}				
			}
		}
		child {node[base] (rt3) {$\times$}
			child {node[base] (elt3) {$\times$} 
				child[missing]
				child {node[base] (in1) {$in_1$}}
			}
			child{node[base] (i+2) {$+$}
				child{node[base] (i23) {$in_2$} }
			}		
		}
	}
	;
\node[base] (ec63) at ($(i+1) !.5! (elt3) + (i13) - (i+1)$) {$6$};
\draw [-] (ec63) -- (elt3);
\draw [-] (in1) -- (i+1);

\draw [->] ($(+1) + (2.5,0)$) -- ($(+1) + (4,0)$) node[midway, yshift=8pt] {Step 1};
\draw [->] ($(+2) + (3,0)$) -- ($(+2) + (4.5,0)$) node[midway, yshift=8pt] {Step 2};
\end{tikzpicture}}
\end{center}
\caption{An example of turning the circuit from Figure~\ref{fig_circ_ex} into a tree-like circuit as described in Lemma~\ref{lem_treelike}}
\label{fig_treelike_ex}
\end{figure}

Additionally, we would also like to take advantage of a similarly convenient property of the formulas of our real first-order logic.
Function and relation symbols in such formulas can have arbitrary index terms as their arguments, however, it can be shown that for all real first-order formulas, there is an equivalent formula in which all function and relation symbols only have variables in their arguments.
This will be useful when constructing circuits for given formulas in the upcoming proofs.

\begin{lemma}\label{lem_normalform}
For every function $f \colon \N \rightarrow \N$ and every $\FO[\FTIME(f(n))]+\SUM+\PROD$ formula $\varphi$, there is a $\FO[\FTIME(f(n))]+\SUM+\PROD$ formula $\varphi'$ which is equivalent to $\varphi$ but where all function and relation symbols only have variables as their arguments.
\end{lemma}
\begin{proof}
Let $f \colon \N \rightarrow \N$ be a function and $\varphi$ be a $\FO[\FTIME(f(n))]+\SUM+\PROD$ formula in which there are occurrences of relation symbols, of which some arguments are non-variable index terms.
Then $R$ be such a relation symbol in $\varphi$.
Then $\varphi$ contains a subformula $\psi$ of the form 
\[
\psi = R(h_1, \dots, h_k),
\]
where $h_1, \dots, h_k$ are index terms (which might in turn be function symbols applied to more index terms).
There must be a deepest level of nesting, at which there are index terms which only have variables as their arguments.
Let $S$ be a relation symbol which has only this type of function symbol as its arguments. 
(And possibly variables, but those can be ignored here, since we can simply leave them unchanged.)
This means that the occurrence of $S$ has the form 
\[
S(f_1(x_{11}, \dots, x_{1k_1}), \dots, f_\ell(x_{\ell 1}, \dots, x_{\ell k_\ell}))
\]
where $f_1, \dots, f_\ell$ are function symbols. 
Now for each of the function symbols $f_i$, introduce a new variable symbol $y_i$ and proceed by replacing the occurrence $S(f_1(x_{11}, \dots, x_{1k_1}), \dots, f_\ell(x_{\ell 1}, \dots, x_{\ell k_\ell}))$ in $R(h_1, \dots, h_k)$ by $S(y_1, \dots, y_\ell)$ and then add the quantifier prefix $\exists y_1 \dots \exists y_\ell$ and the conjunct $f_1(x_{11}, \dots, x_{1k_1}) = y_1 \land \dots \land f_\ell(x_{\ell 1}, \dots, x_{\ell k_\ell}) = y_\ell$ to $\psi$ so that it has the following form:
\begin{align*}
\psi = &~ \exists y_1 \dots \exists y_\ell : R(h_1, \dots, h_k) \land \\
& ~f_1(x_{11}, \dots, x_{1k_1}) = y_1 \land \dots \land f_\ell(x_{\ell 1}, \dots, x_{\ell k_\ell}) = y_\ell
\end{align*}
Repeat this process until all arguments for all function and relation symbols are variables.
The resulting formula is semantically equivalent to $\varphi$, since the existentially quantified variables are forced into the same values as in the original formula by the added conjuncts.

For function symbols, this works analogously. \qed
\end{proof}

The proof for the previous Lemma also directly applies to the less general case, where we entirely omit any additional functions and relations:

\begin{lemma}
For every $\FO+\SUM+\PROD$ formula $\varphi$, there is a $\FO+\SUM+\PROD$ formula $\varphi'$ which is equivalent to $\varphi$ but where all function and relation symbols only have variables as their arguments.
\end{lemma}

\begin{definition} 
	Let \Arb denote the set of all finitary relations over \Rinf{} and all functions $f \colon \R^k \to \R$ for $k \in \N$.
\end{definition}

\noindent{}For the upcoming proof we also need some additional notation: For every \FO{} formula $\varphi$ and every variable $x$ let $\varphi[a/x]$ denote $\varphi$ where each occurrence of $x$ is replaced by $a$.
We write $\varphi[a_1/x_1, ..., a_n/x_n]$ to denote several such replacements.

\begin{theorem}
$\FOArb + \SUM + \PROD = \ACO$.
\label{thm-FOArb}
\end{theorem}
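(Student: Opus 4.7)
The plan is to prove the two inclusions separately. For $\FOArb \subseteq \ACO$ I proceed by structural induction on $\varphi$ and, for each fixed input length $n = |A|$, build a constant-depth polynomial-size circuit $C_n$ that, when fed $\enc(\struc{D})$, evaluates $\varphi$ on $\struc{D}$. Index terms translate to constant-depth subcircuits that extract the appropriate function value from the encoding: given the fixed arities and the ordering on the signature, the position of $X(h_1,\ldots,h_k)$ inside $\enc(\struc{D})$ is a low-degree polynomial in the index-term values and can be computed by a constant-depth arithmetic subcircuit (then turned into gates of types $1$--$6$ via Lemma~\ref{lem_aux_gates}). Atomic formulas $t_1 = t_2$ and $t_1 < t_2$ become relational gates; Boolean connectives add only constant depth; and each quantifier $\exists v$ or $\forall v$ becomes an unbounded-fan-in OR or AND over the $n$ possible values of $v$, realized as a sum (respectively product) gate followed by a $sign$ gate. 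Symbols from $L_a$ are non-uniform: for each $n$ their truth tables on $A^k$ are hard-coded as constant gates feeding into a constant-depth selector addressed by the index-term outputs. Since $\varphi$ has constant size, the depth stays constant and each quantifier multiplies size by at most $n$, keeping the overall size polynomial.

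For $\ACO \subseteq \FOArb$ I start with a family $(\Cn)_{n \in \N}$ of depth $d$ and size at most $n^q$ and invoke Lemma~\ref{lem_full_tree} to assume each $\Cn$ is tree-like with a well-defined level for every gate. Since every non-input gate has outdegree $1$, traversing backward from the output yields, for each gate $v$, a unique descending path whose length equals $v$'s level, and I name the gate at level $\ell \le d$ on that path by a tuple $(g_1,\ldots,g_\ell)$ of elements of $A$ (after absorbing a constant-power Cartesian blow-up of the universe into single coordinates, which is first-order definable). I introduce into $L_a$ arbitrary numerical relations $\mathrm{Type}_\ell$, $\mathrm{Const}_\ell$, $\mathrm{InputIndex}_\ell$ for $\ell \le d$ that encode, for each $\Cn$, the type, the constant label, and the input-coordinate label of the gate at $(g_1,\ldots,g_\ell)$. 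I then write a $\FOArb + \SUM + \PROD$ number term that recursively simulates the circuit with exactly $d$ nested aggregations: the value at the root is the sum or product over $g_1 \in A$ of the value at $(g_1)$, which in turn is the sum or product over $g_2$ of the value at $(g_1,g_2)$, and so on down to the leaves, where the $L_f$ function symbols deliver the corresponding input coordinates and constants are returned verbatim. Per-gate case distinction is handled by multiplying each branch by the characteristic function (provided by the Proposition, with $\MAX$-elimination via Lemma~\ref{lem_fo_max}) of the appropriate value of $\mathrm{Type}_\ell$, so that only the branch matching the actual gate type contributes. Finally, Lemma~\ref{lem_fo_sum_prod} absorbs $\SUM$ and $\PROD$ back into plain $\FOArb$.

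The hard part is the $\ACO \subseteq \FOArb$ direction, and within it keeping the formula depth constant while simulating polynomially many gates per level. The tree-like normal form of Lemma~\ref{lem_full_tree} is exactly what makes the induction collapse to $d$ nested aggregation blocks, one per circuit level, and the sum and product rules deliver the polynomial fan-in at each block. A subtlety I expect is that different gates at the same level may have different types; I plan to resolve this by assembling each level as a weighted sum of all possible type-specific branches, each guarded by a characteristic function of $\mathrm{Type}_\ell$ that is $1$ on the actual type and $0$ otherwise, so that only the correct branch contributes. Once this is arranged, Lemma~\ref{lem_fo_sum_prod} is the final glue that returns us to $\FOArb$.
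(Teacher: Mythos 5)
Your proposal is correct and takes essentially the same route as the paper: the same inductive circuit construction for $\FOArb \subseteq \ACO$ (sign over unbounded sum/product gates for quantifiers, selector subcircuits addressed by equality tests for function applications, hard-coded values for $L_a$), and for the converse the same scheme of levelled value terms over the tree-like normal form of Lemma~\ref{lem_full_tree}, with type dispatch via characteristic functions and $\SUM$/$\PROD$ eliminated by Lemma~\ref{lem_fo_sum_prod}. The only real deviation is that you name gates by their path from the output with per-level relations instead of the paper's $A^q$-tuple naming plus a predecessor function; when doing so, make sure that aggregating over all possible child addresses gives non-existent children the neutral value (0 under an additive or sign gate, 1 under a product gate), which your arbitrary, path-dependent relations can easily encode.
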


\begin{proof}
The proof for this equality follows a similar pattern as the proof for the respective discrete result as presented in \cite{DBLP:books/daglib/0097931}. 

\noindent{}$\FOArb + \SUM + \PROD \subseteq \ACO$: 

\noindent{}The main idea is to show that for any given \FO{} sentence $\varphi$, a circuit family can be constructed which accepts its input if and only if the input encodes an \R-structure that satisfies $\varphi$. 
This is achieved by using addition and multiplication gates to mimic the functionality of existential and universal quantifiers and Boolean connectives and using the available gate types to represent the different kinds of number and index terms that can appear in \FO{} formulae. 
This is a similar basic idea as in the proof in \cite{DBLP:books/daglib/0097931}, however, the technical execution of that idea is quite different thanks to the fact that we are dealing with arithmetic circuits and a logic which deals with Boolean and arithmetic terms of a dyadic structure.

To show that $\FOArb + \SUM + \PROD{}$ is included in \ACO, we will show that for any $\FOArb + \SUM + \PROD$-sentence $\varphi$, we can create an \ACO{} circuit family which decides exactly the set defined by $\varphi$. 
Without loss of generality let $\varphi$ contain only function and relation symbols which only have variables as their arguments.
Given a fixed encoding size $n$ of input \R-structures $\struc{D} = (\struc{A}, \struc{F})$ ($n$ = $\abs{\enc(\struc{D})}$), we can for any \FO{} formula reconstruct $\abs{A}$ from $n$ as described on page~\pageref{par_enc_rec}. 
We will denote $\abs{A}$ by $u$. 

For any subformula $\psi$ of $\varphi$ with exactly $k$ free variables $x_1, ..., x_k$, and any vector $(m_1, ..., m_k) \in A^k$ we can construct an arithmetic circuit $\Cn^{\psi(m_1, ..., m_k)}$ with the following property: 
For any input structure \struc{D} such that $\abs{\enc(\struc{D})} = n$ it holds that $\struc{D} \models \psi[m_1/x_1, ..., m_k/x_k]$ if and only if $\enc(\struc{D})$ is accepted by $\Cn^{\psi(m_1, ..., m_k)}$.

At the very top of the circuit is the output node. 
The rest of the circuit is defined by induction. A formula $\varphi$ with $k$ free variables $x_1, ..., x_k$ and natural numbers $m_1, ..., m_k$, with $1 \leq m_i \leq u$ for all $i$ are given. 
\begin{enumerate}
\item Let $\varphi \fassign \exists y \psi(y)$. If $y$ does not occur free in $\psi$, then the respective circuit for $\varphi$ is the same as for $\psi$, i.e., $\Cn^{\varphi(m_1, ..., m_k)} = \Cn^{\psi(m_1, ..., m_k)}$. Otherwise, the free variables in $\psi$ are $x_1, ..., x_k, y$. $\Cn^{\varphi(m_1, ..., m_k)}$ now consists of a sign gate with an unbounded {fan-in} addition gate as its predecessor which in turn has the circuits $\Cn^{\psi(m_1, ..., m_k, i)}$ as its predecessors for $1 \leq i \leq u$.
\item If $\varphi \fassign \forall y \psi(y)$, then $\Cn^{\varphi(m_1, ..., m_k)}$ is defined as in the existential case, but with a multiplication gate below the sign gate.
\item Let $\varphi \fassign \neg \psi$. Then $\Cn^{\varphi(m_1, ..., m_k)}$ consists of a subtraction gate, which subtracts the sign of $\Cn^{\psi(m_1, ..., m_k)}$ from 1. 
\item Let $\varphi \fassign \psi \land \xi$. Then $\Cn^{\varphi(m_1, ..., m_k)}$ consists of a sign gate followed by a multiplication gate with $\Cn^{\psi(m_1, ..., m_k)}$ and $\Cn^{\xi(m_1, ..., m_k)}$ as its predecessors. (The sign gate is technically not necessary for this case, but we keep it for consistency with e.g. the construction for $\lor$.)
\item If $\varphi \fassign \psi \lor \xi$, $\varphi \fassign \psi \imp \xi$ or $\varphi \fassign \psi \biimp \xi$, then $\Cn^{\varphi(m_1, ..., m_k)}$ follows analogously to $\varphi \fassign \psi \land \xi$. 
\item Let $\varphi \fassign h_1 \feq h_2$ for index terms $h_1, h_2$. Then $\Cn^{\varphi(m_1, ..., m_k)}$ consists of an equality gate with the circuits $\Cn^{h_1(m_1, ..., m_k)}$ and $\Cn^{h_2(m_1, ..., m_k)}$ as its predecessors.
\item If $\varphi \fassign t_1 \feq t_2$ for number terms $t_1, t_2$, then $\Cn^{\varphi(m_1, ..., m_k)}$ is defined analogously to the case with index terms.
\item Let $\varphi \fassign t_1 < t_2$ for number terms $t_1, t_2$. Then $\Cn^{\varphi(m_1, ..., m_k)}$ consists of a $<$ gate with $\Cn^{t_1(m_1, ..., m_k)}$ and $\Cn^{t_2(m_1, ..., m_k)}$ as its predecessors.
\end{enumerate}
For the cases 6, 7 and 8, we also need to show how non-formula index and number terms can be evaluated by our circuit. We will define these by induction as well. Let $h$ be an index term:
\begin{enumerate}
\item Let $h \fassign x$ for $x \in \Vars$. Then $x$ must be $x_i$ for an $i \in {1, ..., k}$ and have previously been quantified. Then $\Cn^{h(m_1, ..., m_k)}$ consists of the constant gate with value $m_i$.
\item Let $h \fassign f(x_1, ..., x_\ell)$ for a $\ell$-ary function symbol $f \in L_a$ and variables $x_1, ..., x_\ell$.
Let $x_i$ be the $\textit{idx}_i$th variable quantified in the original formula for all $1 \leq i \leq \ell$.
Then $\Cn^{h(m_1, ..., m_k)}$ consists of a single constant gate with the value $f(m_{\textit{idx}_1}, \dots, m_{\textit{idx}_\ell})$.
\item If $h \fassign f(x_1, ..., x_\ell)$ for a $\ell$-ary function symbol $f \in L_s$ and variables $x_1, ..., x_\ell$, then $\Cn^{h(m_1, ..., m_k)}$ consists of a single addition gate which has the input gate representing the value $f(m_{\textit{idx}_1}, \dots, m_{\textit{idx}_\ell})$ as above as its sole predecessor.
(The addition gate itself only serves the purpose of making the description of the uniformity of the circuit a little bit easier later on.
It can be regarded as a dummy gate.)
We know where the correct input gate is, since we know the ordering and arities of the function symbols in the input structure.
\end{enumerate}
Let $t$ be a number term:
\begin{enumerate}
\item If $t \fassign c$ for $c \in \R$, then $\Cn^{t(m_1, ..., m_k)}$ consists of a constant gate with value $c$.
\item If $t \fassign f(h_1, ..., h_\ell)$ for a $\ell$-ary function symbol $f \in L_a$ and index terms $h_1, ..., h_\ell$, then $\Cn^{t(m_1, ..., m_k)}$ is defined analogously to the second case of defining index terms.
\item If $t \fassign f(h_1, ..., h_\ell)$ for a $\ell$-ary function symbol $f \in L_f$ and index terms $h_1, ..., h_\ell$, then $\Cn^{t(m_1, ..., m_k)}$ is defined as above but with the input gates describing $f$ instead of constant gates.
\item If $t \fassign t_1 + t_2$ or $t \fassign t_1 \times t_2$ for number terms $t_1, t_2$, then $\Cn^{t(m_1, ..., m_k)}$ consists of a + or $\times$ gate at the top with the circuits $\Cn^{t_1(m_1, ..., m_k)}$ and $\Cn^{t_2(m_1, ..., m_k)}$ as its predecessors.
\item If $t \fassign \sumi(t_1(i))$ or $t \fassign \prodi(t_2(i))$, then the circuit is constructed as for the existential or universal quantifier, respectively, except that the sign gate is omitted.
\end{enumerate}
If $\varphi$ is a sentence, then this construction leads to a circuit deciding $S = \{\struc{D} \in \structR(\sigma) \mid \struc{D} \models \varphi \}$. 
Since this circuit's depth does not depend on $n$ and its size is polynomial in $n$, $S \in \ACO$.

\noindent{}$\ACO \subseteq \FOArb + \SUM + \PROD$:

\noindent{}The idea for this inclusion is to construct a formula for a given circuit family $\C$ that is satisfied by exactly those structures whose encodings are evaluated to $1$ by the circuits of  \C{}. 
This is accomplished by defining number terms which encode the structure of the given circuit. 

This idea is again very similar to the proof in \cite{DBLP:books/daglib/0097931}, nevertheless, again the differences lie in the technical details. 
While the structures used in \cite{DBLP:books/daglib/0097931} are word structures, the functional structures used here require interpreting the circuit inputs as an encoded \R-structure which contains a single unary function that maps an index $i$ to the value of the $i$th input gate of the circuit. 
These real values then need to be accumulated and "carried" through the circuit by defining a number term for each level of the circuit, which maps each gate on that level to its value during the computation.

To show that \ACO{} is included in $\FOArb + \SUM + \PROD$, we create, for any given \ACO{} set $S$, an $\FOArb + \SUM + \PROD$-sentence which defines $S$. In order to achieve this, we want to create a sentence, which talks about the structure of the circuits of the \ACO-circuit family which decides $S$. Since we have access to arbitrary functions, we can essentially just encode the structure of any given circuit into functions and have the interpretation of the function symbols we use be dependent on the length of the input $n$. However, the function symbols themselves, and thus the formula, do not depend on $n$. 
Since the depth of our circuits is constant and we can assume that they are tree-like with each input-output-path having the same length, as shown in Lemma \ref{lem_treelike}, we can construct a sentence which essentially describes the gates on each {level} of the circuit.
Let $S \in \ACO$ via circuit family \C, $depth(\Cn)=d$ and let $q$ be such that $size(\Cn) \leq n^q$ for all $n \in \N$. Without loss of generality, let $\Cn$ be a circuit as described in Lemma~\ref{lem_treelike}, i.e., for every gate $g$ in $\Cn$ it holds that all paths from input gates to $g$ have the same length. We now create a $\FOArb + \SUM + \PROD$-sentence $\varphi$ which defines the set decided by \C. The set $L_f$ of the signature of $\varphi$ will only contain one function symbol $f$, which then for every input gate $v$ in $\Cn$ leads to $f(v)$ being interpreted as the value of $v$. 
Since \Cn{} is of size at most $n^q$, we can uniquely identify the gates of \Cn{} with elements of $A^q$. Let $v$ be a gate in $\Cn$ encoded by $(v_1, ..., v_q)$. $t_n \colon A^q \rightarrow \R$, $c_n \colon A^q \rightarrow \R$, $in_n \colon A^{q+1} \rightarrow \R$ and $pred_n \colon A^{2q} \rightarrow \R$ are functions where $t_n(v_1, ..., v_q)$ is the type of $v$ as per Definition~\ref{def_arith_circ}, $in_n(v_1, ..., v_q, i)$ is $1$ if $v$ is the input gate $i$ of \Cn{} and $0$ otherwise, $c_n(v_1, ..., v_q)$ is the value of gate $v$ if $v$ is a constant gate or $0$, if it is not and $pred_n(v_1, ..., v_q, w_1, ..., w_q)$ is $1$ if $v$ is a predecessor of the gate encoded by $(w_1, ..., w_q)$ and $0$ otherwise. We will use $t$, $in$, $c$ and $pred$ as the respective symbols for these functions. Note that this means that the interpretation of these symbols depends on the input structure.
We can now create a $q$-ary number term $val_x(v_1, ..., v_q)$ for every $x \leq d$, such that it holds that if $(v_1, ..., v_q)$ encodes a gate in $\Cn$ on {level} $x$ (meaning that every path from an input gate to $v$ has length $x$) then for all inputs $(a_1, ..., a_n)$ to the circuit $\Cn$, $val_x(v_1, ..., v_q)$ is the value of the gate encoded by $(v_1, ..., v_q)$ in $\Cn$'s computation when given an \R-structure $\struc{D}$ where $\enc(\struc{D}) = (a_1, ..., a_n)$. 
We will define $val_x$ by induction on $x$. If $x = 0$ then $(v_1, ..., v_q)$ must encode an input gate. We therefore have:
\begin{equation}
val_0(v_1, ..., v_q) \fassign \sumi(in(v_1, ..., v_q, i) \times f(i))
\end{equation}
For $1 \leq x \leq d$, define $val_x$ as follows:
\begin{equation}
\begin{split}
val_x(v_1, ..., v_q) \fassign & \chi[t(v_1, ..., v_q) \feq 2] \times T_{2,x}(v_1, ..., v_q)\\
& + \chi[t(v_1, ..., v_q) \feq 3] \times T_{3,x}(v_1, ..., v_q)\\
& + \chi[t(v_1, ..., v_q) \feq 4] \times T_{4,x}(v_1, ..., v_q)\\
& + \chi[t(v_1, ..., v_q) \feq 5] \times T_{5,x}(v_1, ..., v_q)\\
& + \chi[t(v_1, ..., v_q) \feq 6] \times T_{6,x}(v_1, ..., v_q)
\end{split}
\end{equation}
where 
\begin{align}
T_{2,x}(v_1, ..., v_q) &\fassign c(v_1, ..., v_q)\\
T_{3,x}(v_1, ..., v_q) &\fassign \sumiq(pred(i_1, ..., i_q, v_1, ..., v_q) \times val_{x-1}(i_1, ..., i_q))\\
T_{4,x}(v_1, ..., v_q) &\fassign \prodiq(pred(i_1, ..., i_q, v_1, ..., v_q) \times val_{x-1}(i_1, ..., i_q))\\
T_{5,x}(v_1, ..., v_q) &\fassign \sumiq(pred(i_1, ..., i_q, v_1, ..., v_q) \times sign(val_{x-1}(i_1, ..., i_q)))\\
T_{6,x}(v_1, ..., v_q) &\fassign \sumiq(pred(i_1, ..., i_q, v_1, ..., v_q) \times val_{x-1}(i_1, ..., i_q))
\end{align}
We can now use $val_x$ to define a formula $\varphi$ over signature $\{\{\}, \{f\}, \{t, in, c, pred\}\}$ which defines the set decided by $\Cn$ as follows: (Recall that $d$ denotes the depth of the circuits of the circuit family defining $S$.)
\begin{equation}
\varphi \fassign \forall i_1 ... \forall i_q(\chi[t(i_1, ..., i_q) \feq 6 \imp val_{d}(i_1, ..., i_q) \feq 1])
\end{equation}
The formula $\varphi$ is independent of the input length $n$, however the interpretations of its function symbols of $L_a$ are not. \qed
\end{proof}

\section{A Characterization for \texorpdfstring{\unip{}{\normalfont -}\ACO}{UP-AC0}}
\label{main-up}

Having now developed a description for non-uniform \ACO, in the upcoming part of this paper we derive descriptions for several of its uniform variations.
In particular, we are going to have a look at two uniform subclasses and one generalization of \ACO based on time complexity of \R-machines and one uniform variation based on logical descriptions. 
We start by giving a description for the polynomial time uniform \unip{}-\ACO. 

For this reason, we introduce another notation here:

\begin{definition}
By $\FTIME{}(f(n))$ we will denote all functions that for a finite set $S$ and $k \in \N$ map from $S^k$ to \R{} or to $S$ and that are computable by an \R-machine in time bounded by $\bO(f(\abs{S}))$.
\end{definition}

\begin{theorem}
\label{thm_U_P-AC0}
$\FO[\FTIME{}(n^{\bO(1)})] + \SUM + \PROD = \unip\text{-}\ACO$
\end{theorem}
\begin{proof}
$\FO[\FTIME{}(n^{\bO(1)})] + \SUM + \PROD \subseteq \unip\text{-}\ACO$:

\noindent{}The construction of the circuit is analogous to the one in Theorem~\ref{thm-FOArb}. We now need to demonstrate that the constructed circuit is \ptr{}-uniform. This follows from the fact that the circuit's size is polynomial in the length of its input $n$ and that the construction of each gate takes at most polynomial time. 
In fact, the time it takes to construct the next gate when constructing the circuit in, for example, a depth-first manner is constant in all cases except for those, in which a function or a predicate of $L_a$ needs to be evaluated.
In those cases, the required time is polynomial. That means that the entire circuit can be constructed in polynomial time. We will choose as the numbering of the circuit just the order, in which the gates are first constructed. Since we can compute $\abs{A}$ from $n = \abs{\enc(\struc{D})}$ in logarithmic time as described on page~\pageref{par_enc_rec}, it follows that there exists a machine which on input $(n, v_{nr}, p_{idx})$ can compute $(t, p_{nr}, c)$ as described on page~\pageref{par_uniformity} in time bounded by a polynomial in $n$. \smallskip

\noindent{}$\unip\text{-}\ACO \subseteq \FO{}[\FTIME(n^{\bO(1)})] + \SUM + \PROD$: 

\noindent{}For a given $\unip\text{-}\ACO$ set $S$, we can also create a formula in the same way as in Theorem \ref{thm-FOArb}. We only need to show that we can define the required number terms $t(v_1, ..., v_q)$, $c(v_1, ..., v_q)$, $in(v_1, ..., v_q, i)$, $pred(v_1, ..., v_q, w_1, ..., w_q)$, $\sumi(F(i_1, ..., i_q, \ow))$ and $\prodiq(F(i_1, ..., i_q, \ow))$ in $\FO{}[\FTIME(n^{\bO(1)})] + \SUM + \PROD$, since we can then just use the construction from Theorem~\ref{thm-FOArb}. Let $A$ be the universe of the input structure.
\begin{enumerate}
\item Since the family defining $S$ is \ptr{}-uniform, clearly $t(v_1, ..., v_q)$, $c(v_1, ..., v_q)$ and $in(v_1, ..., v_q, i)$ can be defined in $\FO{}[\FTIME(n^{\bO(1)})] + \SUM + \PROD$.
\item $\sumi(F(i_1, ..., i_q, \ow))$ and $\prodi(F(i_1, ..., i_q, \ow))$ are given by the extension.
\item $pred(v_1, ..., v_q, w_1, ..., w_q)$ can be defined in $\FO{}[\FTIME(n^{\bO(1)})] + \SUM + \PROD$ in the following way:
We define the predicate
	\begin{equation}
	pred_k \defeq \Set{(v_1, ..., v_q, w_1, ..., w_q, k_1, ..., k_q) | \parbox{4.5cm}{$v$ is the $k$th predecessor of $w$ where $v$ is the gate encoded by $(v_1, ..., v_q)$, $w$ is the gate encoded by $(w_1, ..., w_q)$ and $k$ is the number encoded by $(k_1, ..., k_q)$.}} 
	\end{equation}
which we can evaluate in polynomial time, since $S$ is \ptr{}-uniform. We can now define $pred(v_1, ..., v_q, w_1, ..., w_q)$ in $\FO{}[\FTIME(n^{\bO(1)})] + \SUM + \PROD$ as follows:
	\begin{multline}
	pred(v_1, ..., v_q, w_1, ..., w_q) \fassign \\ \chi[\exists k_1, ..., \exists k_q : pred_k(v_1, ..., v_q, w_1, ..., w_q, k_1, ..., k_q)]
	\end{multline}
\end{enumerate}
Therefore we can define $S$ using a $\FO{}[\FTIME(n^{\bO(1)})] + \SUM + \PROD$-sentence. \qed
\end{proof}


\section{A Characterization for \texorpdfstring{\unil{}{\normalfont -}\ACO}{UL-AC0}}
\label{main-ult}

We have demonstrated that the same construction as in the proof of Theorem \ref{thm-FOArb} can be applied in the \ptr{}-uniform case if we restrict our logic to a polynomial extension rather than a universal one. 
For the second uniformity result, we will produce a description for \unil{}-\ACO{} sets. 
The construction is again very similar to the one for the non-uniform case.

We take advantage of the tree-like nature of the circuits we constructed with our method so far and number their gates in a post-order fashion. 
This will be helpful for showing \ltr{}-uniformity, since it gives us the path from the output gate to any other gate and hence allows us to construct it without needing to construct the entire circuit.
Essentially, a tree-like circuit would be a tree if we copied its input gates for each outgoing edge, such that each input gate has outdegree $1$.

\begin{definition}
By the \new{\fsize{}} of a circuit we denote the number of gates of the circuit where each input gate is counted once for each connection it has to the circuit. 
An example for this is given in Figure~\ref{fig_tree-shape-size}.
\end{definition}

\begin{figure}
\begin{center}
\begin{tikzpicture}[
	base/.style={circle,draw,minimum size=30pt}
]
\node[base] (out) {$out$}
	child {node[base] (+1) {$+$}
		child {node[base] (t1) {$\times$}
			child {node[base] (in1) {$in_1$} }
			child {node[base] (in2) {$in_2$} }
		}
		child {node[base] (+2) {$+$} 
			child[missing]
			child {node[base] (in3) {$in_3$}}
		}
	};
\draw [-] (in2) -- (+2); 
\end{tikzpicture}
\end{center}
\caption{The size of this circuit is 7, however its \fsize{} is 8, since the second input gate has two successors.}
\label{fig_tree-shape-size}
\end{figure}

\noindent{}Since we would like to have access to the \fsize{} of our circuits during our computations, we need to see, how efficiently we can compute the \fsize{} of circuits in our construction. As it turns out, the number of computation steps we need does not depend on the size of our given input structure and is therefore constant for our purposes.

\begin{lemma}
\label{lem_subcircuit_fullsize}
For a circuit constructed for a given $\FOR$-sentence and \R-structure, as in Theorem~\ref{thm-FOArb}, we can compute the \fsize{} of the circuit for $\varphi$ or any circuit for a subformula or number or index term of $\varphi$ in constant time with respect to the given input structure.
\end{lemma}
\begin{proof}
Note that since the variable assignments of the notation for Theorem~\ref{thm-FOArb} do not make a difference for the size of the circuit, we will omit them in this proof.

Let $\varphi$ be the given formula and $u = \abs{A}$ be the size of the input structure. We give the \fsize{} of the circuit for every subformula and term of $\varphi$ by induction in the same way, as the circuit is constructed in the proof of Theorem~\ref{thm-FOArb}.

\begin{enumerate}
\item Let $\varphi = \exists y \psi(y)$. Then $\fsize(\C_n^{\varphi}) = u \cdot \fsize(\C_n^{\psi})$.
\item If $\varphi = \forall y \psi(y)$, then the \fsize{} is computed as in the existential case.
\item Let $\varphi = \neg \psi$. Then $\fsize(\C_n^{\varphi}) = 2 + \fsize(\C_n^{\psi})$.
\item Let $\varphi = \psi \land \xi$. 
Then 

$\fsize(\C_n^{\varphi}) = 1 + \fsize(\C_n^{\psi}) + \fsize(\C_n^{\xi})$.
\item If $\varphi = \psi \lor \xi$, $\varphi = \psi \imp \xi$ or $\varphi = \psi \biimp \xi$, then the \fsize{} can be computed analogously to $\varphi = \psi \land \xi$.
\item Let $\varphi = h_1 \feq h_2$ for index terms $h_1, h_2$. Then $\fsize(\C_n^{\varphi}) = 1 + \fsize(\C_n^{h_1}) + \fsize(\C_n^{h_2})$.
\item If $\varphi = t_1 \feq t_2$ for number terms $t_1, t_2$, then the \fsize{} can be computed as for index terms.
\item If $\varphi = t_1 < t_2$ for number terms $t_1, t_2$, then the \fsize{} can be computed as for equality.
\end{enumerate}

\noindent{}The \fsize{} of index terms is computed as follows. Let $h$ be an index term.

\begin{enumerate}
\item Let $h = x$ for $x \in \Vars$. Then $\fsize(\C_n^{h}) = 1$.
\item If $h \fassign f(x_1, ..., x_\ell)$ for a $\ell$-ary function symbol $f \in L_s$ and variables $x_1, ..., x_\ell$, then $\fsize(\C_n^{h}) = 1$.
\item If $h = f(x_1, ..., x_\ell)$ for a $\ell$-ary function symbol $f \in L_s$ and variables $x_1, ..., x_\ell$,, then $\fsize(\C_n^{h}) = 2$.
\end{enumerate}

\noindent{}The \fsize{} of number terms is computed as follows. Let $t$ be a number term.

\begin{enumerate}
\item Let $t = c$ for $c \in \R$. Then $\fsize(\C_n^{t}) = 1$.
\item If $t = f(x_1, ..., x_\ell)$ for a $\ell$-ary function symbol $f \in L_a$ and variables $x_1, ..., x_\ell$, then $\fsize(\C_n^{t}) = 1$.
\item If $t = f(x_1, ..., x_\ell)$ for a $\ell$-ary function symbol $f \in L_f$ and variables $x_1, ..., x_\ell$, then $\fsize(\C_n^{t}) = 2$.
\item Let $t = t_1 + t_2$ or $t = t_1 \times t_2$ for number terms $t_1, t_2$, then $\fsize(\C_n^{t}) = 1 + \fsize(\C_n^{t_1}) + \fsize(\C_n^{t_2})$.
\end{enumerate}

To get the \fsize{} of the entire circuit for $\varphi$, we need to add $1$ to the final \fsize{}, since the output gate is not considered for subformulas.

We have now shown how to compute the \fsize{} of the circuit for $\varphi$ and any of its subformulas and terms. Each individual computation can be done in constant time, and since the formula is constant, only a constant amount of those operations is required. \qed
\end{proof}

\begin{theorem}
\label{thm_U_L-AC0}
$\FO[\FTIME{}(\log n)] + \SUM + \PROD = \unil\text{-}\ACO$
\end{theorem}
\begin{proof}
$\FO{}[\FTIME(\log n)] + \SUM + \PROD \subseteq \unil\text{-}\ACO$:

\noindent{}Just as in the polynomial case, we will use the same construction as in Theorem \ref{thm-FOArb} for the logarithmic case. 
Showing the \ltr{}-uniformity of the resulting circuit, however, is not as simple as it was in Theorem \ref{thm_U_P-AC0}, since we cannot just construct the entire circuit to retrieve the information for a singular gate. 
We can, however, construct only part of the circuit to arrive at the gate which we would like to retrieve in order to remain within logarithmic time. 
We will essentially construct the path from the output node to the node we are looking for, which has constant length.
Let $S$ be the set of \R-structues defined by a given $\FO{}[\FTIME(\log n)] + \SUM + \PROD$-sentence $\varphi$. 
To create a circuit family deciding $S$, define the structure of our circuit depending on $\varphi$ similarly to the proof of Theorem \ref{thm-FOArb}. 
Here however, we will make sure that for each gate $v$, we know the \fsize{} of all of its direct subcircuits, i.e., the subcircuits induced by $v$'s predecessor gates, in order to make sure that we continue our construction at the right predecessor of $v$. 
In doing so, we can always compute the \fsize{}s of the predecessor subcircuits of any given node in time constant in the length of the input. 
We will additionally number our nodes in post-order, to ensure that we know where to continue constructing our circuit. 
The circuit is then constructed/structured as follows:

Since the input gates do not behave tree-like, we explicitly give the numbering they get, whenever it is needed: 
The $i$th input gate has the number $\fsize(\Cn)+i$. 
At the very top of the circuit, there is the output node numbered $\fsize(C_n)$, the predecessor of which then has the number $\fsize(C_n)-1$. 
The rest of the circuit is numbered as follows: 
Let the root gate of the subcircuit representing $\varphi$ be numbered $q$.
\begin{enumerate}
\item Let $\varphi \fassign \exists y \psi(y)$. 
Then the construction is as in Theorem \ref{thm-FOArb}. 
The sign gate is numbered $q$, the addition gate is numbered $q-1$ and the root of the $i$th predecessor circuit $\Cn^{\psi(m_1, ..., m_k, i)}$ is numbered $q-2-(u-i)\cdot\fsize(\Cn^{\psi(m_1, ..., m_k, 1)})$. 
(Since the \fsize of $\Cn^{\psi(m_1, ..., m_k, i)}$ is the same for all $i$, we can simply use the \fsize of $\Cn^{\psi(m_1, ..., m_k, 1)}$ for each $i$.)
\item If $\varphi \fassign \forall y \psi(y)$, then the construction is as in Theorem \ref{thm-FOArb} and the numbering is analogous to the existential case.
\item Let $\varphi \fassign \neg \psi$. Then the construction is as in Theorem \ref{thm-FOArb}, the subtraction gate is numbered $q$, the constant gate with value $1$ is numbered $q-1$ and the root of $\Cn^{\psi(m_1, ..., m_k)}$ is numbered $q-2$.
\item Let $\varphi \fassign \psi \land \xi$. Then the construction is as in Theorem \ref{thm-FOArb}, the sign node is numbered $q$, the $\times$ gate is numbered $q-1$, the root of $\Cn^{\psi(m_1, ..., m_k)}$ is numbered $q-2-\fsize(\Cn^{\xi(m_1, ..., m_k)})$ and the root of $\Cn^{\xi(m_1, ..., m_k)}$ is numbered $q-2$. 
\item If $\varphi \fassign \psi \lor \xi$, $\varphi \fassign \psi \imp \xi$ or $\varphi \fassign \psi \biimp \xi$, then $\Cn^{\varphi(m_1, ..., m_k)}$ and its numbering follows analogously to $\varphi \fassign \psi \land \xi$.
\item Let $\varphi \fassign h_1 \feq h_2$ for index terms $h_1, h_2$. Then the construction is as in Theorem \ref{thm-FOArb}, the equality gate is numbered $q$, the root of $\Cn^{h_1(m_1, ..., m_k)}$ is numbered $q-1-\fsize(\Cn^{h_2(m_1, ..., m_k)})$ and the root of $\Cn^{h_1(m_1, ..., m_k)}$ is numbered $q-1$.
\item If $\varphi \fassign t_1 \feq t_2$ for number terms $t_1, t_2$, then $\Cn^{\varphi(m_1, ..., m_k)}$ is defined and numbered analogously to the case with index terms.
\item Let $\varphi \fassign t_1 < t_2$ for number terms $t_1, t_2$. Then $\Cn^{\varphi(m_1, ..., m_k)}$ is defined as in Theorem \ref{thm-FOArb} and numbered analogously to the case of equality.
\end{enumerate}
For the cases 6, 7 and 8, we also need to show how construction and numbering can be done for non-formula index and number terms. We will define these by induction as well. Let $h$ be an index term:
\begin{enumerate}
\item Let $h \fassign x$ for $x \in \Vars$. Then the construction is as in Theorem \ref{thm-FOArb} (The constant gate is numbered $q$).
\item Let $h \fassign f(x_1, ..., x_\ell)$ for a $\ell$-ary function symbol $f \in L_a$ and variables $x_1, ..., x_\ell$.
Then the construction is as in Theorem~\ref{thm-FOArb} and the singular constant gate is numbered $q$.
\item If $h \fassign f(x_1, ..., x_\ell)$ for a $\ell$-ary function symbol $f \in L_s$ and variables $x_1, ..., x_\ell$, then the construction is as in Theorem \ref{thm-FOArb}, the addition "dummy" gate is numbered $q$ and the input gate is numbered according to the rule at the top. 
\end{enumerate}
Let $t$ be a number term:
\begin{enumerate}
\item If $t \fassign c$ for $c \in \R$, then the construction is as in Theorem \ref{thm-FOArb}.
(The constant gate is numbered $q$.)
\item If $t \fassign f(x_1, ..., x_\ell)$ for a $\ell$-ary function symbol $f \in L_a$ and variables $x_1, ..., x_\ell$, then the construction is as in Theorem \ref{thm-FOArb} and the numbering is done as described in the case of index terms.
\item If $t \fassign f(x_1, ..., x_\ell)$ for a $\ell$-ary function symbol $f \in L_f$ and variables $x_1, ..., x_\ell$, then the construction is as in Theorem \ref{thm-FOArb} and the numbering is done as described in the case of index terms.
\item If $t \fassign t_1 + t_2$ or $t \fassign t_1 \times t_2$ for number terms $t_1, t_2$, then the construction is as in Theorem \ref{thm-FOArb}. The addition gate is numbered $q$, the root of $\Cn^{t_2{m_1, ..., m_k}}$ is numbered $q-1$ and the root of $\Cn^{t_1(m_1, ..., m_k)}$ is numbered $q-1-\fsize(\Cn^{t_2(m_1, ..., m_k)})$.
\item If $t \fassign \sumi(t_1(i))$ for a number term $t_1$ in which $i$ occurs freely, then $\Cn^{t(m_1, ..., m_k)}$ consists of an addition gate at the top, numbered $q$, with the root nodes of the circuits $\Cn^{t_1(m_1, ..., m_k, i)}, 1 \leq i \leq u$ as its $u$ predecessors, numbered $q - 1 - (u-i) \cdot \fsize(\Cn^{t_1(m_1, ..., m_k, i)})$, similar to the case of existential quantification. If $i$ does not occur freely in $t_1$, then the predecessors of node $q$ are $u$ gates,  of which each induces a copy of the circuit $\Cn^{t_1(m_1, ..., m_k)}$ and which are numbered the same way as for the case where $i$ is free in $t_1$.
\item If $t \fassign \prodi(t_1(i))$ for a number term $t_1$, then the construction and numbering is done as above, just using a multiplication gate instead of an addition gate.
\end{enumerate}
Note that this numbering gives each gate a distinct number and makes sure that for all non-input gates $v$ it holds that $v$'s number is higher than those of $v$'s predecessors. Additionally, it holds that for any two predecessors $v_1$ and $v_2$ of $v$, if $v_1$ is numbered lower than $v_2$, then all nodes in $v_1$'s induced subcircuit are also numbered lower than $v_2$ and vice versa. Since we can compute the \fsize{} of any subcircuit in constant time, we can also compute the number of the node where we need to continue in constant time. Note also that since the input gates do not behave tree-like there are holes in the numbering. 
Now we define an \R-machine $M$ which on input $(n, v_{nr}, p_{idx})$ returns $(t, p_{nr}, c)$ as described on page~\pageref{par_uniformity}. As described on page~\pageref{par_enc_rec}, we know that we can compute $\abs{A}$ from $n = \abs{\enc(\struc{D})}$ in time logarithmic in $n$. To now produce the desired output, we take advantage of our node numbering. We know that our last node -- the output node -- has number $\fsize(\Cn)$ and its singular predecessor node has number $\fsize(\Cn) - 1$. Let $\fs_\varphi$ denote $\fsize(\Cn^{\varphi(m_1, ..., m_k)})$ -- which is the same as $\fsize(\Cn^{\varphi(m_1, ..., m_k, i)})$ etc., since the variable assignments do  not have an effect on the size of the circuit -- and $u$, as in the proof of Theorem \ref{thm-FOArb}, the size of the universe of the input structure $\abs{A}$. Now the machine works as follows: 
If $v_{nr} > \fsize{\Cn} + n$, then return $(0,0,0)$.
If $v_{nr} = \fsize(\Cn)$ then return $(6, \fsize(\Cn)-1, 0)$ if $p_{idx} = 1$ and $(6, 0, 0)$ otherwise.
If $v_{nr} = \fsize(\Cn) + i$, for $i \in \lbrace 1, ..., n \rbrace$ then return $(1, 0, i)$.
Otherwise proceed as follows: Let $q$ be the number of the root of the current subcircuit. (We will use $q$ to describe both the value $q$ and the register in which we store that value.)\medskip
\begin{enumerate}
\item Let $\varphi \fassign \exists y \psi(y)$.
If $v_{nr}$ = $q$, then return $(5, q-1, 0)$ if $p_{idx} = 1$ and $(5, 0, 0)$ otherwise.
If $v_{nr}$ = $q-1$, then return $(3, q - 2 - (u-p_{idx})\cdot\fs_\psi, 0)$ if $p_{idx} \leq u$ and $(5, 0, 0)$ otherwise.
Otherwise gate $v_{nr}$ is contained in the subcircuit induced by the gate numbered $y = q-2-(\ceil\cdot{\frac{q-1-v_{nr}}{\fs_\psi}}-1) \cdot \fs_\psi$ where $y$ is the smallest natural number such that $y \geq v_{nr}$ and $y = q-2-(u-i)\cdot\fs_\psi$ for some $i \in \{1, ..., u\}$. We can compute $y$ in time logarithmic in $u$ by using binary search on $i$. We therefore store $y$ in $q$ and continue with the construction of the subcircuit induced by node $y$.
\item If $\varphi \fassign \forall y \psi(y)$, then the construction is analogous to the existential case.
\item Let $\varphi \fassign \neg \psi$.
If $v_{nr} = q$, then return $(7, q - 1, 0)$ if $p_{idx} = 1$, $(7, q - 2, 0)$ if $p_{idx}=2$ and $(7, 0, 0)$ otherwise. 
If $v_{nr} = q - 1$, then return $(2, 0, 1)$. 
Otherwise, store $q - 2$ in $q$ and continue with the construction of $\Cn^{\psi(m_1, ..., m_k)}$.
\item Let $\varphi \fassign \psi \land \xi$. 
If $v_{nr} = q$ then return $(5, q-1, 0)$ if $p_{idx}=1$ and $(5, 0, 0)$ otherwise.
If $v_{nr} = q-1$ then return $(4, q - 2 - \fsize(\Cn^{\xi(m_1, ..., m_k)}), 0)$ if $p_{idx} = 1$, $(4, q - 2, 0)$ if $p_{idx} = 2$ and $(4, 0, 0)$ otherwise.
Otherwise, if $v_{nr} \leq q - 2 - \fsize(\Cn^{\xi(m_1, ..., m_k)})$, store $q - 2 - \fsize(\Cn^{\xi(m_1, ..., m_k)})$ in $q$ and construct $\Cn^{\psi(m_1, ..., m_k)}$ and otherwise store $q - 2$ in $q$ and construct $\Cn^{\psi(m_1, ..., m_k)}$.
\item If $\varphi \fassign \psi \lor \xi$, $\varphi \fassign \psi \imp \xi$ or $\varphi \fassign \psi \biimp \xi$, then proceed analogously to $\varphi \fassign \psi \land \xi$.
\item If $\varphi \fassign h_1 \feq h_2$ for index terms $h_1, h_2$, then proceed analogously to the Boolean connectives.
\item If $\varphi \fassign t_1 \feq t_2$ for number terms $t_1, t_2$, then proceed analogously to the Boolean connectives.
\item If $\varphi \fassign t_1 < t_2$ for number terms $t_1, t_2$, then proceed analogously to the Boolean connectives.
\end{enumerate}
For the cases 6, 7 and 8, we also need to explain how to construct the subcircuits for non-formula index and number terms. We will define these by induction as well. Let $h$ be an index term:
\begin{enumerate}
\item Let $h \fassign x$ for $x \in \Vars$. Then if $q = v_{nr}$, $x$ must be $x_i$ for some $x_i \in \Vars$, thus return $(2, 0, m_{i_{xi}})$, where $x_i$ is the $i_{xi}$th quantified variable in the original formula. 
Otherwise return $(0, 0, 0)$.
\item Let $h \fassign f(x_1, ..., x_\ell)$ for a $\ell$-ary function symbol $f \in L_a$ and variables $x_1, ..., x_\ell$, then if $q = v_{nr}$ return $(2, 0, f(m_{i_{x1}}, ..., m_{i_{x\ell}}))$ where $x_i$ was the $i_xi$th quantified variable in the original formula and return $(0, 0, 0)$, otherwise.
\item Let $h \fassign f(x_1, ..., x_\ell)$ for a $\ell$-ary function symbol $f \in L_s$ and variables $x_1, ..., x_\ell$.
If $v_{nr} = q$, then return $(3, \mathrm{input\_nr}, 0)$ where $\mathrm{input\_nr}$ is the number of the input gate representing the function value $f(m_{i_{x1}}, ..., m_{i_{x\ell}})$ with $m_{i_{xj}}$ as above if $p_{idx} = 1$ and $(3, 0, 0)$, if $p_{idx} \neq 1$.
If $v_{nr}$ was the number of an input gate, it would have been returned at the top.
\end{enumerate}
Let $t$ be a number term:
\begin{enumerate}
\item If $t \fassign c$ for $c \in \R$. Then if the constant gate is numbered $v_{nr}$, return $(2, 0, c)$. Otherwise return $(0,0,0)$.
\item If $t \fassign f(x_1, ..., x_\ell)$ for a $\ell$-ary function symbol $f \in L_a$ and variables $x_1, ..., x_\ell$, then construct analogously to the case of index terms.
\item If $t \fassign f(x_1, ..., x_\ell)$ for  $\ell$-ary function symbol $f \in L_f$ and variables $x_1, ..., x_\ell$, then construct analogously to the case of index terms.
\item If $t \fassign t_1 + t_2$ or $t \fassign t_1 \times t_2$ for number terms $t_1, t_2$, continue constructing as in the case of Boolean connectives.
\item Let $t \fassign \sumi(t_1(i))$ for a number term $t_1$.
If $v_{nr} = q$ then return $(3, q-1-(u-p_{idx})\cdot\fsize(\Cn{t_1(m_1, ..., m_k, p_{idx})}), 0)$ if $p_{idx} \leq u$ and $(3, 0, 0)$ otherwise.
Otherwise store $q-1-(u-p_{idx})\cdot\fsize(\Cn{t_1(m_1, ..., m_k, p_{idx})})$ in $q$ and continue with the construction of $\Cn^{t_1(m_1, ..., m_k, p_{idx})}$.
\item If $t \fassign \prodi(t_1(i))$ for a number term $t_1$, then continue constructing as in the case of $\sumi$.
\end{enumerate}
The way $M$ works, after decoding the input structure, it only ever needs to perform a constant number of operations on each level of the circuit, with the exception of the predicates and functions which are not given in the input structure. For those, $M$ needs logarithmic time. This means in total that since the circuit only has constant depth and hence a constant number of levels, $M$ works in logarithmic time. Therefore, $S$ is an element of $\unil$-\ACO. \smallskip

\noindent{}$\unil\text{-}\ACO \subseteq \FO{}[\FTIME(\log n)] + \SUM + \PROD$: 

\noindent{}Showing that a set $S \in \unil\text{-}\ACO$ can be defined using $\FO{}[\FTIME(\log n)] + \SUM + \PROD$ is done in the same way as it was done in the polynomial case (Theorem \ref{thm_U_P-AC0}). We construct the formula analogously and we can compute the functions we need for that construction in logarithmic time as follows:
\begin{enumerate}
\item We can compute the functions $t(v_1, ..., v_q)$, $c(v_1, ..., v_q)$, $in(v_1, ..., v_q, i)$ and $pred(v_1, ..., v_q, w_1, ..., w_q)$ in logarithmic time analogous to Theorem~\ref{thm_U_P-AC0}, since our circuit family is \ltr{}-uniform.
\item $\sumi$ and $\prodi$ are given in the specification of $\FO{}[\FTIME(\log n)] + \SUM + \PROD$.
\end{enumerate}\qed
\end{proof}

\begin{figure}
\begin{center}
\begin{tikzpicture}[
	scale=0.6,
	every node/.style={transform shape}, 
	base/.style={circle,draw,minimum size=30pt}, 
	circ/.style={minimum size=35pt},
	triangle/.style={regular polygon, regular polygon sides = 3, draw, inner sep=0, text width=15mm}
]
\tikzstyle{level 1}=[sibling distance=60mm]
\tikzstyle{level 2}=[sibling distance=40mm]
\tikzstyle{level 3}=[sibling distance=25mm]
\node[base, label={180:$12$}] (out3) {$out$}
	child { node[base, label={180:$11$}] (+3) {$+$} 
		child {node[base, label={180:$10$}] (lt3) {$\times$}
			child {node[base, label={180:$9$}] (c63) {$6$}}
			child {node[base, label={180:$8$}] (i+1) {$+$} 
				child {node[missing] (i13) {} edge from parent[draw=none]}				
			}
		}
		child {node[base, label={180:$7$}] (rt3) {$\times$}
			child {node[base, label={0:$6$}] (elt3) {$\times$} 
				child[missing]
				child {node[base, label={180:$13$}] (in1) {$in_1$}}
			}
			child{node[base, label={180:$3$}] (i+2) {$+$}
				child{node[base, label={0:$14$}] (i23) {$in_2$} }
			}		
		}
	}
	;
\node[base, label={180:$5$}] (ec63) at ($(i+1) !.5! (elt3) + (i13) - (i+1)$) {$6$};
\draw [-] (ec63) -- (elt3);
\draw [-] (in1) -- (i+1);

\end{tikzpicture}
\end{center}
\caption{The circuit from Figure~\ref{fig_circ_ex} has been transformed as shown in Figure~\ref{fig_treelike_ex} and been numbered as in (a simplified version of) the numbering for Theorem~\ref{thm_U_L-AC0}. 
If we for example wanted to construct the addition gate numbered $8$, we would start at $12$, construct the gate $11$ and we would then know to keep going at gate $10$, since $8$ is less than $10$ but larger than $7$. 
Note, that the input gates are exceptions in this numbering, since they do not behave tree-like. 
Their numbering starts just above the \fsize of the circuit, so if a machine producing the gates of the circuit gets a number $12 < k \leq 14$ as an input, it can immediately return $(1, 0, k)$ (as per Definition~\ref{def_uniformity}).}
\label{fig_postorder_ex}
\end{figure}


%

With the construction shown in Theorem~\ref{thm_U_L-AC0} we can now generalize that, whenever we have a variant of \ACO{} given by a time complexity uniformity criterion that is at least logarithmic, we can describe it using first-order logic extended with functions of that class' time complexity and the sum and product rule. This result is formalized as follows:

\begin{corollary}
\label{cor_generalization}
For any function $f \colon \N \rightarrow \N$ with $f(n) \geq \log n$ for all $n$, it holds that
\begin{equation}
\uni{f}\text{-}\ACO = \mathrm{FO}_\R[\FTIME{}(f(n))] + \SUM + \PROD,
\end{equation}
where \uni{f}-$\ACO$ is the class of sets decidable by circuit families, which can be constructed as described in Definition~\ref{def_uniformity} in time bounded by $\bO(f(n))$.

\end{corollary}

\begin{remark}
The logarithmic bound for $f$ in Corollary~\ref{cor_generalization} stems from the time it takes to decode an encoded \R-structure as stated on page~\pageref{par_enc_rec}.

\end{remark}

\section{A Characterization for \texorpdfstring{\unifo{}{\normalfont -}\ACO}{U_FO-AC0_R}}
\label{main-ufo}

The uniformity of the complexity classes we just discussed was based on how much time an \R-machine needs to answer queries about a circuit.
In this section, however, we are going to turn to the connection of first-order logic over the reals and those circuit families, which are themselves describable using first-order formulae and terms.

In order to make the proof for the upcoming theorem a little more concise, we define syntax trees for $\FO+\SUM+\PROD$ formulae.

\begin{definition}\label{def_syntree}
Let $\varphi$ be a $\FO+\SUM+\PROD$ formula or term in which all functions and relations only have variables as their arguments.
The \emph{syntax tree} of $\varphi$ is a tree, which represents the syntactical structure of the the formula or term. 
Each node of that tree represents a syntactical part of $\varphi$, such that each subformula of $\varphi$ is represented by a subtree of $\tree(\varphi)$.
We denote the syntax tree of $\varphi$ by $\tree(\varphi)$.


\begin{enumerate}
\item Let $\varphi = \exists y \psi(y)$ or . Then $\tree(\varphi)$ consists of a node labelled $\exists y$ with an edge to the root of $\tree(\psi(y))$.
\item If $\varphi = \forall y \psi(y)$, then $\tree(\varphi)$ is analogous with an universal quantifier.
\item Let $\varphi = \neg \psi$. Then $\tree(\varphi)$ consists of a node labelled $\neg$ with an edge to the root of $\tree(\psi)$.
\item Let $\varphi = \psi \land \xi$. Then $\tree(\varphi)$ consists of a node labelled $\land$ with an edge to the root of $\tree(\psi)$ and an edge to the root of $\tree(\xi)$.
\item If $\varphi = \psi \lor \xi$, $\varphi = \psi \imp \xi$ or $\varphi = \psi \biimp \xi$, then $\tree(\varphi)$ looks analogous to the case above.
\item Let $\varphi = h_1 \feq h_2$ for index terms $h_1, h_2$. Then $\tree(\varphi)$ consists of a node labelled $=$ with an edge to the root of $\tree(h_1)$ and an edge to the root of $\tree(h_2)$.
\item If $\varphi = t_1 \feq t_2$ for number terms $t_1, t_2$, then $\tree(\varphi)$ looks analogous to the case for index terms.
\item If $\varphi = t_1 < t_2$ for number terms $t_1, t_2$, then $\tree(\varphi)$ looks analogous to the case for equality.
\end{enumerate}

\noindent{}Let $h$ be an index term. Then $\tree(h)$ is defined as follows:

\begin{enumerate}
\item Let $h = x$ for $x \in \Vars$. Then $\tree(h)$ consists only of a single node labelled $x$.
\item Let $h = f(x_1, ..., x_\ell)$ for a $\ell$-ary function symbol $f \in L_s$ and variables $x_1, ..., x_\ell$.
Then $\tree(h)$ consists only of a node labelled $f(x_1, \dots, x_\ell)$.
\end{enumerate}

\noindent{}Let $t$ be an index term. Then $\tree(t)$ is defined as follows:

\begin{enumerate}
\item Let $t = c$ for $c \in \R$. Then $\tree(t)$ consists only of a single node labelled $c$.
\item Let $t = f(x_1, ..., x_\ell)$ for a $\ell$-ary function symbol $f \in L_f$ and variables $x_1, ..., x_\ell$.
Then $\tree(h)$ consists only of a node labelled $f(x_1, \dots, x_\ell)$.
\item Let $t = t_1 + t_2$ or $t = t_1 \times t_2$ for number terms $t_1, t_2$. 
Then $\tree(t)$ consists of a node labelled with the respective arithmetic operation and an edge to $\tree(t_1)$ and an edge to $\tree(t_2)$.
\item Let $t = \sumi(t_1(i))$ for a number term $t_1$. 
Then $\tree(t)$ consists of a node labelled $\sumi$ with an edge to the root of $\tree(t_1)$.
\item Let $t = \prodi(t_1(i))$ for a number term $t_1$. 
Then $\tree(t)$ consists of a node labelled $\prodi$ with an edge to the root of $\tree(t_1)$.
\end{enumerate}

Similarly to our circuits, we will refer to the number of nodes in a syntax tree as the \emph{size} of the syntax tree.
\end{definition}

\begin{example}
The syntax tree for the $\FO$ formula $\exists x : f(x) = 3 \lor g(x) = 2$ is depicted in Figure~\ref{fig_syntree_ex}.

\begin{figure}
\begin{center}
\begin{tikzpicture}[
	scale=0.6,
	every node/.style={transform shape}, 
	base/.style={circle,draw,minimum size=30pt}, 
	circ/.style={minimum size=35pt},
	triangle/.style={regular polygon, regular polygon sides = 3, draw, inner sep=0, text width=15mm}
]
\tikzstyle{level 1}=[sibling distance=60mm]
\tikzstyle{level 2}=[sibling distance=35mm]
\tikzstyle{level 3}=[sibling distance=20mm]
\node[base] (ex) {$\exists x$}
	child { node[base] (or) {$\lor$} 
		child {node[base] (eq1) {$=$}
			child{node[base] (f) {$f(x)$}}
			child{node[base] (3) {$3$}}		
		}
		child {node[base] (eq2) {$=$}
			child{node[base] (g) {$g(x)$}}
			child{node[base] (2) {$2$}}
		}
	}
	;
\end{tikzpicture}
\end{center}
\caption{Syntax tree for the $\FO$ formula $\exists x : f(x) = 3 \lor g(x) = 2$.}
\label{fig_syntree_ex}
\end{figure}
\end{example}

\begin{theorem}\label{thm_UFO_AC0}
$\FO+\SUM+\PROD = \unifo\text{-}\ACO$
\end{theorem}
\begin{proof}
$\FO+\SUM+\PROD \subseteq \unifo\text{-}\ACO$:

In order to prove this inclusion, we need to show that for each set defined by an $\FO+\SUM+\PROD$-sentence, there is also a \FO-uniform circuit family \C deciding it.
Let $\varphi$ be such a sentence.
The circuits of \C are structured in the same way as in Theorem~\ref{thm-FOArb} (except of course for the constructions modelling symbols in $L_a$, which are not necessary here).
It remains to be shown that this circuit family is \FO-uniform.

Let the depth of the circuits of \C be $d$. 
Then for each $n$, each gate $g$ in $C_n$ can uniquely be identified by a sequence of $d$ values in the range $\{1, \dots, n+1\}$, where $n$ is the number of input gates of $C_n$ (and therefore also equal to the encoding length of structures given as inputs to $C_n$).
We will also refer to this sequence as the \emph{number} of $g$.
As a matter of fact, it will turn out that the range $\{1, \dots u+1\}$ is already enough, where $u$ is the size of the universe of the encoded input structure.
This sequence encodes the path from the output gate to $g$ in the following way:

The value at index $i$ in the sequence selects at which child to continue at distance $i-1$ to the output gate. 
The first occurrence of the value $u+1$ in the sequence denotes where to stop.
This means that each number consists of a prefix of values between $1$ and $u$, since the circuit construction of Theorem~\ref{thm-FOArb} only needs at most fan-in $u$, and a suffix consisting of only a sequence of $u+1$s.
Therefore, the output gate is always numbered $(u+1, \dots, u+1)$.
This also means, that the edge relation can easily be deduced from the gate numbers: 
If the prefixes of two gate numbers are identical, except that one of the numbers contians one more non $u+1$ value, there is an edge between the two gates.

In Figure~\ref{fig_unifo_numbering}, the addition gate would be numbered $(1, u+1, u+1, u+1)$ and the multiplication gate would be numbered $(1, 2, u+1, u+1)$.
Since their two numbers differ only in the last non $u+1$ value of the multiplication gate, at the index of which in the number of the addition gate there is a $u+1$, an edge connects these two gates.

\begin{figure}
\begin{center}
\begin{tikzpicture}[
	scale=0.6,
	every node/.style={transform shape}, 
	base/.style={circle,draw,minimum size=30pt}, 
	circ/.style={minimum size=35pt},
	triangle/.style={regular polygon, regular polygon sides = 3, draw, inner sep=0, text width=15mm}
]
\tikzstyle{level 1}=[sibling distance=60mm]
\tikzstyle{level 2}=[sibling distance=40mm]
\tikzstyle{level 3}=[sibling distance=30mm]
\node[base] (out1) {$out$}
	child { node[base] (+) {$+$} 
		child {node[base] (c3) {$3$}}
		child {node[base] (t) {$\times$}
			child{node[base] (i1) {$in_1$}}
			child{node[base] (i21) {$in_2$}}
		}
	}
	;
\end{tikzpicture}
\end{center}
\caption{Example for the numbering scheme in the proof of Theorem~\ref{thm_UFO_AC0}}
\label{fig_unifo_numbering}
\end{figure}

This numbering scheme is used for all non-input gates. 
Since input gates can have multiple successors, we number them separately: 
the $i$th input gate is numbered $(u+1, \dots, u+1, i)$ for $1 \leq i \leq n$.
The edge relation also needs to be considered separately, but we will come to how that is done precisely.

\begin{remark}
To use the numbering scheme above, we need access to the $u+1$th element of the structure over $\ttup$. 
In the case $u = n$, we do not have access to such an element. 
Therefore we actually use a vector of length $2d$ and encode each element of the aforementioned vector as two elements.
We will, however, explain the construction by length $d$ vectors, since this is just a technicality and using longer vectors would only convolute matters.
\end{remark}
In order to produce the formulae and terms for \FO-uniformity as per Definition~\ref{def_syntree}, we go through $\varphi$ similarly as we did for the proof of Theorem~\ref{thm-FOArb}.
We essentially traverse the syntax tree of $\varphi$ (as defined in Definition~\ref{def_syntree}) in a depth-first manner and define the formulae and terms iteratively.

For most of the node types in the syntax tree of $\varphi$, only a constant number of gates needs to be added, so the gate numbers can essentially be hardcoded.
For quantifiers and \sumi and \prodi constructions, the number of added gates depends on $u$. 
For those nodes, we use an existentially quantified new variable to ensure that the respective gates are correctly identified.

This idea can be seen in Figure~\ref{fig_unifo_addvar}, where for the multiplication gate numbered $(1, 1, u+1, u+1)$, we set $\varphi_{\times}(\ol{x}) \defeq \ol{x} = (1, 1, u+1, u+1)$, however, for the addition gates we need to add a quantifier.
We set $\varphi_{+}(\ol{x}) = \exists z : z \leq u \land \ol{x} = (1, 1, z, u+1)$.
An explicit application of this for a $\FO+\SUM+\PROD$-sentence is presented in Example~\ref{ex_FO=UFO_AC0}.

%

\begin{figure}
\begin{center}
\begin{tikzpicture}[
	scale=0.6,
	every node/.style={transform shape}, 
	base/.style={circle,draw,minimum size=30pt}, 
	circ/.style={minimum size=35pt},
	triangle/.style={regular polygon, regular polygon sides = 3, draw, inner sep=0, text width=15mm},
]
\tikzstyle{level 1}=[sibling distance=60mm]
\tikzstyle{level 2}=[sibling distance=40mm]
\tikzstyle{level 3}=[sibling distance=10mm]
\tikzstyle{level 4}=[sibling distance=15mm]
\node[base] (out1) {$out$}
	child { node[base] (s) {$sign$} 
		child {node[base] (t) {$\times$}
			child{node[base] (p1) {$+$}
				child{node[base] (e51) {$3$}}
				child{node[base] (e53) {$4$}}				
			}
			child{node[missing] (e41) {}}
			child{node[missing] (dots) {$\dots$}}			
			child{node[missing] (e42) {}}
			child{node[base] (p2) {$+$}
				child{node[base] (e51) {$3$}}
				child{node[base] (e53) {$4$}}				
			}
		}
	}
	;
\end{tikzpicture}
\caption{Example for specifying formulae and terms for \FO-uniformity.}
\label{fig_unifo_addvar}
\end{center}
\end{figure}

The previously described ideas are then technically executed in the following way:

Let $\varphi$ be a $\FO+\SUM+\PROD$-sentence as mentioned at the top, let the signature of $\varphi$ be $\sigma$ and let $A$ be the set of \R-structures defined by $\varphi$.
Let also $(C_n)_{n \in \N}$ be the circuit family deciding $A$ as constructed in the proof of Theorem~\ref{thm-FOArb} and let $d$ be the depth of $(C_n)_{n \in \N}$.

Before going into the explicit definitions for our formulae and terms, we need to have some auxiliary values and notations available to represent the elements we wish to use for our gate numbering.
For this purpose, we define the following helper formula, where $f_1, \dots, f_k$ are the function symbols used in $\varphi$ and $ar(f_i)$ is the arity of $f_i$ for all $i$.
\begin{align*}
\varphi_\mathrm{aux} \equiv & \mathrlap{\exists n \forall x : \rk(n) \geq \rk(x) \land} \\
					 & \mathrlap{\exists u : \sum_{1 \leq i \leq k} \rk(u)^{ar(f_i)} = \rk(n) \land} \\
					 & \mathrlap{\exists \mine \forall x : \rk(\mine) \leq \rk(x) \land} \\
					 & \exists \smine \forall x : && \rk(\smine) \neq \rk(\mine) \land \\
					 &&& (\rk(x) \neq \rk(\mine) \to \rk(\smine) \leq \rk(x))
\end{align*}
The rank of $n$ is equal to the length of the input of the circuit, and therefore the encoding length of the input structure of $\varphi$, since the formulas we define here are interpreted over structures over \ttup.
We only need this value to extract $u$, the rank of which is the size of the universe of the input structure of $\varphi$.
The values \mine and \smine represent the two lowest ranked values (in the structure, if interpreted as a structure over \ttup).

In order to define the formulae and terms necessary for \FO-uniformity, we also introduce the following notations:

We will write $u+1$ to denote the element $v$ such that $\rk(v) = \rk(u) + 1$ and $1$ and $2$ to denote min and second\_min, respectively, when there is no risk of confusing them with actual numbers. 


Let $p$ be a tuple of length $\abs{p}$ which is padded with $u+1$, i.e., $p$ is of the following form:
\[
p = (p_1, p_2, \dots, p_k, u + 1, \dots, u + 1)
\]
and let $a_1, \dots, a_l \in \{1, 2, u+1\} \cup \Vars$ for $l = \abs{p} - k$ where $\Vars$ is a set of variable symbols. 
We then write 
\[
\papp{a_1, \dots, a_l}
\]
to denote the tuple
\[
(p_1, \dots, p_k, a_1, \dots, a_l).
\]

With these notations and auxiliary values at hand, we can now start defining the formulae and terms for our \FO-uniformity.
Since we know that for every $n \in \N$, the $\rk(i)$th input gate of $C_n$ is numbered $(u+1, \dots, u+1, i)$, we set 
\[
\varphi_{\mathrm{input}}(\ol{x}, \ol{i}) \defeq \exists a : \ol{i} = (a, \dots, a)  \land \ol{x} = (u+1, \dots, u+1, a).
\]

For the remaining formulae and terms, we proceed iteratively by traversing the syntax tree of $\varphi$ in a depth-first manner.
For the purpose of clarity, in the following whenever we refer to a \emph{node}, we mean a node of the syntax tree and whenever we call something a \emph{gate}, we refer to a gate in a circuit. 
(As opposed to previously, where we used both terms to refer to gates in a circuit.)
We will eventually define $\varphi^s_{+}$, $\varphi^s_{\times}$, $\varphi^s_{\mathrm{sign}}$, $\varphi^s_{\mathrm{const}}$ and $\varphi^s_{\mathrm{const\_val}}$, where $s$ is the size of the syntax tree of $\varphi$. 
Along with a respective quantifier prefix, those formulae and terms will then represent $\varphi_{+}$, $\varphi_{\times}$, $\varphi_{\mathrm{sign}}$, $\varphi_{\mathrm{const}}$ and $\varphi_{\mathrm{const\_val}}$ as required for \FO-uniformity.
As briefly described previously, the edge relation $\varphi_\mathrm{E}$ will for the most part be defined by the non $u+1$ prefixes of two gates with the following idea:
\begin{align*}
\varphi_\mathrm{E}(\ol{x}, \ol{y}) \defeq &~ \varphi_{\mathrm{aux}} \land \varphi^s_\mathrm{quantifier\_prefix} \land \bigwedge\limits_{1 \leq i \leq d} (y_i \neq u+1 \to x_i = y_i) \land \\
& ((x_1 \neq u+1 \land x_2 = u+1 \land y_1 = u+1) \lor \\
& \bigvee\limits_{2 \leq i < d} x_i \neq u+1 \land x_{i+1} = u+1 \land y_i = u+1 \land y_{i-1} \neq u+1))
\end{align*}
However, this still leaves out the outgoing edges from input gates.
We will handle those by similarly to the gate types iteratively defining $\varphi_\mathrm{input\_edges}$ and adding them to $\varphi_\mathrm{E}$ in the endy.
We start out by setting
\begin{equation*}
\varphi^0_{+}(\ol{x}) = \varphi^0_{\times}(\ol{x}) = \varphi^0_{\mathrm{sign}}(\ol{x}) = \varphi^0_{\mathrm{const}}(\ol{x}) \defeq \bot,
\end{equation*}
\begin{equation*}
\varphi^0_{\mathrm{const\_val}}(\ol{x}) \defeq 0,
\end{equation*}
\begin{equation*}
\varphi^0_{\mathrm{quantifier\_prefix}} \defeq \top
\end{equation*}
and 
\begin{equation*}
\varphi^0_\mathrm{input\_edges}(\ol{x}, \ol{y}) \defeq \bot.
\end{equation*}
Since we know that the output gate is always numbered $(u+1, \dots, u+1)$, we also set
\[
\varphi_{\mathrm{output}}(\ol{x}) \defeq \ol{x} = (u+1, \dots, u+1).
\]
Let $\text{next\_root}$ be the gate type of the topmost gate of the circuit construction representing the root node of $\tree(\varphi)$ (as per the way its defined in the proof of Theorem~\ref{thm-FOArb}).
Then we also set 
\begin{equation}
\varphi_{\mathrm{next\_root}}^0(\ol{x}) \defeq \ol{x} = (1, u+1, \dots, u+1).
\label{eq_unifo_next_root}
\end{equation}

The remaining steps for defining our formulae and terms are taken iteratively by going through $\tree(\varphi)$ in a depth-first manner as follows:

Let $\varphi^j_{+}(\ol{x})$, $\varphi^j_{\times}(\ol{x})$, $\varphi^j_{\mathrm{sign}}(\ol{x})$, $\varphi^j_{\mathrm{const}}(\ol{x})$, $\varphi^j_{\mathrm{const\_val}}(\ol{x})$, $\varphi^j_{\mathrm{quantifier\_prefix}}(\ol{x})$ and $\varphi^j_\mathrm{input\_edges}(\ol{x}, \ol{y})$ be the functions, relations which have been defined in the previous step.

Let $p$ be the tuple representing the root gate of the circuit construction for the current node in $\tree(\varphi)$, which was previously set. 
In the first step, just after handling the output gate, $p$ is $(1, u+1, \dots, u+1)$ as set in equation~(\ref{eq_unifo_next_root}).

\begin{enumerate}
\item\label{item_exists} If the current node in $\tree(\varphi)$ is labelled $\exists y$, then set 
\begin{align*}
\varphi^{j+1}_{+}(\ol{x}) \defeq~ & \varphi^{j}_{+}(\ol{x}) \lor \ol{x} = \papp{1, u+1, \dots, u+1} 
\end{align*}
And if $\text{next\_root}$ is the gate type of the topmost gate of the circuit representing the next node in $\tree(\varphi)$ (as per the way its defined in the proof of Theorem~\ref{thm-FOArb}), then with $z \in \Vars$ being a new variable symbol we do the following:
If $\mathrm{next\_root} = +$, we add $\ol{x} = \papp{1,z, u+1, \dots, u+1}$ to the disjunction above, and otherwise we set
\[
\varphi^{j+1}_{\mathrm{next\_root}}(\ol{x}) \defeq \varphi^{j}_{\mathrm{next\_root}}(\ol{x}) \lor \ol{x} = \papp{1, z, u+1, \dots, u+1}.
\]
In both cases we also set
\[
\varphi^{j+1}_{\mathrm{quantifier\_prefix}} \defeq \exists z : \rk(z) \leq \rk(u) \land \varphi^{j}_{\mathrm{quantifier\_prefix}}.
\]
We will need the variable $z$ again when we reach variable or function nodes in the syntax tree.
Therefore we keep the index of $z$ in $\ol{x}$ in mind.
\begin{remark}
In each case all of the different $\varphi^{j+1}$ which are not explicitly mentioned are implicitly assumed to remain unchanged, i.e., $\varphi^{j+1} \defeq \varphi^{j}$.
\end{remark}
\item\label{item_forall} If the current node in $\tree(\varphi)$ is labelled $\forall y$, then proceed analogously with changing $\varphi^{j+1}_{\times}$ instead of $\varphi^{j+1}_{+}$.
\item If the current node in $\tree(\varphi)$ is labelled $\neg$, then proceed according to the translation of subtraction into our gate types as in Lemma~\ref{lem_aux_gates} as follows:
\begin{align*}
\varphi^{j+1}_{\times}(\ol{x}) \defeq~ & \varphi^{j}_{\times}(\ol{x}) \lor \ol{x} = \papp{2, u+1, \dots, u+1} \\
\varphi^{j+1}_{\mathrm{sign}}(\ol{x}) \defeq~ & \varphi^{j}_{\mathrm{sign}}(\ol{x}) \lor \ol{x} = \papp{2, 2, u+1, \dots, u+1}  \\
\varphi^{j+1}_{\mathrm{const}}(\ol{x}) \defeq~ & \varphi^{j}_{\mathrm{const}}(\ol{x}) \lor \ol{x} = \papp{1, u+1, \dots, u+1} \lor \\
& \ol{x} = \papp{2, 1, u+1, \dots, u+1}  \\
\varphi^{j+1}_{\mathrm{const\_val}}(\ol{x}) \defeq~ & \varphi^{j}_{\mathrm{const\_val}}(\ol{x}) + \\
& \mkern-18mu \chi[\varphi_\mathrm{aux} \land \varphi^j_\mathrm{quantifier\_prefix} \land \ol{x} = \papp{1, u+1, \dots, u+1}] \times 1 + \\
& \mkern-18mu \chi[\varphi_\mathrm{aux} \land \varphi^j_\mathrm{quantifier\_prefix} \land \ol{x} = \papp{2, 1, u+1, \dots, u+1}] \times -1
\end{align*}
And if $\text{next\_root}$ is the gate type of the topmost gate of the circuit representing the next node in $\tree(\varphi)$, if $\mathrm{next\_root}$ is among the gate types specified above, we add  $\ol{x} = \papp{2, 2, 1, u+1, \dots, u+1}$ to the respective disjunction.
Otherwise, we set
\begin{align*}
\varphi^{j+1}_{\mathrm{next\_root}}(\ol{x}) \defeq~&  \varphi^{j}_{\mathrm{next\_root}}(\ol{x}) \lor \ol{x} = \papp{2, 2, 1, u+1, \dots, u+1}.
\end{align*}
\begin{remark}
For $\varphi_\mathrm{const\_val}(\ol{x})$, the formula $\varphi_\mathrm{aux}$ and the quantifier prefix need to be inside the characteristic function each time, since we cannot just prepend them to a number term.
This way, we might quantify variables which are not used in this characteristic function context (for example in the case of quantification on both sides of a conjunction), but that does not cause a problem.
\end{remark}
\item If the current node in $\tree(\varphi)$ is labelled $\land$, then set
\begin{align*}
\varphi^{j+1}_{\times}(\ol{x}) \defeq~ & \varphi^{j}_{\times}(\ol{x}) \lor \ol{x} = \papp{1, u+1, \dots, u+1}
\end{align*}
And if $\text{next\_left\_root}$ and $\text{next\_right\_root}$ are the gate types of the topmost gates of the circuit representing the left and right successor node in $\tree(\varphi)$ respectively, then we also set 
\begin{align*}
\varphi^{j+1}_{\mathrm{next\_left\_root}}(\ol{x}) \defeq~&  \varphi^{j}_{\mathrm{next\_left\_root}}(\ol{x}) \lor \ol{x} = \papp{1,1, u+1, \dots, u+1} \\
\varphi^{j+1}_{\mathrm{next\_right\_root}}(\ol{x}) \defeq~&  \varphi^{j}_{\mathrm{next\_right\_root}}(\ol{x}) \lor \ol{x} = \papp{1,2, u+1, \dots, u+1}.
\end{align*}
\item If the current node in $\tree(\varphi)$ is labelled $\lor$, $\to$ or $\leftrightarrow$ then proceed analogously using the translation in Lemma~\ref{lem_aux_gates} if necessary.
\item If the current node in $\tree(\varphi)$ is labelled $=$, then proceed according to the translation in Lemma~\ref{lem_aux_gates} as in the case for $\neg$.
\item If the current node in $\tree(\varphi)$ is labelled $<$, then proceed according to the translation in Lemma~\ref{lem_aux_gates} as in the case for $\neg$.
\item If the current node in $\tree(\varphi)$ is labelled $x$ for $x \in \Vars$, then the number of the singular constant gate $g_x$ representing $x$ has already been set in the previous step.
Thus we only need to explicitly set $\varphi_{\mathrm{const\_val}}^{j+1}(\ol{x})$.
Let $nr_x$ be the value in the gate number of $g_x$ which was set when $x$ was quantified. 
Then we set 
\[
\varphi_{\mathrm{const\_val}}^{j+1}(\ol{x}) \defeq \varphi_{\mathrm{const\_val}}^{j}(\ol{x}) + \chi[\varphi_\mathrm{aux} \land \varphi^j_\mathrm{quantifier\_prefix} \land \ol{x} = p] \times \rk(nr_x).
\]
\item If the current node in $\tree(\varphi)$ is labelled $f(a_1, \dots, a_\ell)$ for a function symbol $f \in L_s$ and $a_1, \dots, a_\ell \in \Vars$, then in our circuit there is an edge from an input gate to the "dummy" addition gate $g$, of which the number was specified in the previous step.
(Here, we can also see, why a dummy gate was convenient, since otherwise the construction for $f(a_1, \dots, a_\ell)$ would not have a top gate.) 
We therefore need to specify an input edge rather than a gate number. 
Let $p_{a1}, \dots, p_{a\ell}$ be the variables in $p$ which were added, when the quantification of respective variables $a_1, \dots, a_\ell$ was specified (in the cases \ref{item_exists}, \ref{item_forall} or \ref{item_sumprod}).
We then set
\begin{align*}
\varphi^{j+1}_{\mathrm{input\_edges}}(\ol{x}, \ol{y}) \defeq~& \varphi^{j}{\mathrm{input\_edges}}(\ol{x}, \ol{y}) \lor \\
& x_1 = u+1 \land \dots \land x_{d-1} = u+1 \land \\
& \Big(\sum_{1 \leq i < k} \left( \rk(u)^{\textit{ar}(f_i)} \right) + \rk(p_{a1}) \times \rk(u)^{\ell - 1} + \dots \\
& + \rk(p_{a\ell - 1}) \times \rk(u) + \rk(p_{a\ell}) = \rk(x_d) \Big) \land \\
& \ol{y} = p,
\end{align*}
where $f_i$ is the $ith$ function in the ordering of the function symbols in $\sigma$, $f = f_k$ and $\textit{ar}(f_i)$ is the arity of $f_i$ for all $i$. 
\item If the current node in $\tree(\varphi)$ is labelled $c$ for a $c \in \R$, then as in the case for the node labelled with a variable, the gate number for the constant gate has been previously specified and we only need to ensure that the correct constant value is represented by $\varphi^{j+1}_{\mathrm{const\_val}}$.
\[
\varphi_{\mathrm{const\_val}}^{j+1}(\ol{x}) \defeq \varphi_{\mathrm{const\_val}}^{j}(\ol{x}) + \chi[\varphi_\mathrm{aux} \land \varphi^j_\mathrm{quantifier\_prefix} \land \ol{x} = p] \times c.
\]
\item If the current node in $\tree(\varphi)$ is labelled $f(x_1, \dots, x_\ell)$ for a function symbol $f \in L_f$, we proceed analogously to the case for the function symbols of $L_s$.
\item If the current node in $\tree(\varphi)$ is labelled $+$ or $\times$, then the numbering of the $+$/$\times$ gate has already been specified and we only need to make sure that the root gates for the two successor nodes are numbered correctly as well. 
If $\text{next\_left\_root}$ and $\text{next\_right\_root}$ are the gate types of the topmost gates of the circuit construction representing the left and right successor node in $\tree(\varphi)$ respectively, then we also set 
\begin{align*}
\varphi^{j+1}_{\mathrm{next\_left\_root}}(\ol{x}) \defeq~&  \varphi^{j}_{\mathrm{next\_left\_root}}(\ol{x}) \lor \ol{x} = \papp{1, u+1, \dots, u+1} \\
\varphi^{j+1}_{\mathrm{next\_right\_root}}(\ol{x}) \defeq~&  \varphi^{j}_{\mathrm{next\_right\_root}}(\ol{x}) \lor \ol{x} = \papp{2, u+1, \dots, u+1}.
\end{align*}
\item\label{item_sumprod} If the current node in $\tree(\varphi)$ is labelled $\sumi$ or $\prodi$, then the numbering of the $+$/$\times$ gate has already been specified and we again only need to make sure that the root gates of the subcircuits representing the successor node are numbered correctly. 
If $\text{next\_root}$ is the gate type of the topmost gate of the circuit construction representing the next node in $\tree(\varphi)$, then with $z \in \Vars$ being a new variable symbol we also set 
\begin{align*}
\varphi^{j+1}_{\mathrm{next\_root}}(\ol{x}) \defeq~&  \varphi^{j}_{\mathrm{next\_root}}(\ol{x}) \lor \ol{x} = \papp{z, u+1, \dots, u+1} \\
\varphi^{j+1}_{\mathrm{quantifier\_prefix}} \defeq~& \exists z : \rk(z) \leq \rk(u) \land \varphi^{j}_{\mathrm{quantifier\_prefix}}.
\end{align*}
Similarly to the case of quantifier nodes in the syntax tree, we make note of the index of $z$ in $\ol{x}$.
\end{enumerate}

\begin{remark}
For all the different $\varphi^j$, if no explicit definition for $\varphi^{j+1}$ is given, this means $\varphi^{j+1} = \varphi^{j}$.
\label{rm_unifo_phi_unchanged}
\end{remark}

We can now define the formulae and terms for \FO-uniformity as follows:
\begin{align*}
\varphi_+(\ol{x}) \defeq &~ \varphi_{\mathrm{aux}} \land \varphi^s_\mathrm{quantifier\_prefix} \land \varphi^{s}_+(\ol{x}) \\
\varphi_\times(\ol{x}) \defeq &~ \varphi_{\mathrm{aux}} \land \varphi^s_\mathrm{quantifier\_prefix} \land \varphi^{s}_\times(\ol{x}) \\
\varphi_\mathrm{sign}(\ol{x}) \defeq &~ \varphi_{\mathrm{aux}} \land \varphi^s_\mathrm{quantifier\_prefix} \land \varphi^{s}_\mathrm{sign}(\ol{x}) \\
\varphi_{\mathrm{input}}(\ol{x}, \ol{i}) \defeq &~ \varphi_{\mathrm{aux}} \land \exists a : \ol{i} = (a, \dots, a)  \land \ol{x} = (u+1, \dots, u+1, a) \\
\varphi_\mathrm{E}(\ol{x}, \ol{y}) \defeq &~ \varphi_{\mathrm{aux}} \land \varphi^s_\mathrm{quantifier\_prefix} \land \bigwedge\limits_{1 \leq i \leq d} (y_i \neq u+1 \to x_i = y_i) \land \\
& ((x_1 \neq u+1 \land x_2 = u+1 \land y_1 = u+1) \lor \\
& \bigvee\limits_{2 \leq i < d} x_i \neq u+1 \land x_{i+1} = u+1 \land y_i = u+1 \land y_{i-1} \neq u+1)) \lor \\
& \varphi^s_\mathrm{input\_edges}(\ol{x}, \ol{y})\\
\varphi_{\mathrm{output}}(\ol{x}) \defeq &~ \varphi_\mathrm{aux} \land \ol{x} = (u+1, \dots, u+1) \\
\varphi_\mathrm{const}(\ol{x}) \defeq &~ \varphi_{\mathrm{aux}} \land \varphi^s_\mathrm{quantifier\_prefix} \land \varphi^{s}_\mathrm{const}(\ol{x}) \\
\varphi_\mathrm{const\_val}(\ol{x}) \defeq &~ \varphi^{s}_\mathrm{const\_val}(\ol{x}) \\
\varphi_{\mathrm{universe}}(\ol{x}) \defeq & \varphi_+(\ol{x}) \lor \varphi_\times(\ol{x}) \lor \varphi_\mathrm{sign}(\ol{x}) \lor \exists \ol{i} : \varphi_\mathrm{input}(\ol{x}, \ol{i}) \lor \\
& \varphi_\mathrm{output}(\ol{x}) \lor \varphi_\mathrm{const}(\ol{x})
\end{align*}

Since $(C_n)_{n \in \N}$ decides $A$ and we have just shown $(C_n)_{n \in \N}$ to be \FO-uniform, we can conclude that $A \in \unifo$-$\ACO$. \smallskip

\noindent{}$\unifo\text{-}\ACO \subseteq \FO+\SUM+\PROD$:

Showing that every set decided by a \FO-uniform circuit family \C can be defined by a $\FO+\SUM+\PROD$-sentence can be show similarly to how this was shown for the non-uniform case in Theorem~\ref{thm-FOArb}.
The sentence is constructed in the same way, and all the formulae and terms used in the sentence are available to us thanks to the \FO-uniformity of \C in the following way:
\begin{align*}
t(\ol{v}) \defeq~& \chi[\exists x : \varphi_\mathrm{input}(\ol{v}, (x, \dots, x))] \times 1 + \\
& \chi[\varphi_\mathrm{const}(\ol{v})] \times 2 + \\
& \chi[\varphi_+(\ol{v})] \times 3 + \\
& \chi[\varphi_\times(\ol{v})] \times 4 + \\
& \chi[\varphi_\mathrm{sign}(\ol{v})] \times 5 + \\
& \chi[\varphi_\mathrm{output}(\ol{v})] \times 6 \\
c(\ol{v}) \defeq & \varphi_\mathrm{const\_val}(\ol{v}) \\
in(\ol{v}, i) \defeq & \chi[ \varphi_\mathrm{input}(\ol{v}, (i, \dots, i)) ] \\
pred(\ol{v}, \ol{w}) \defeq & \chi[ \varphi_\mathrm{E}(\ol{v}, \ol{w}) ]
\end{align*}

With this, we have shown that $\unifo$-$\ACO = \FO+\SUM+\PROD$. \qed
\end{proof}

\begin{example}\label{ex_FO=UFO_AC0}

The circuit $C_6$ of the \FO-uniform circuit family for the $\FO+\SUM+\PROD$-sentence $\varphi \defeq \exists x : f(x) = 3 \lor g(x) = 2$ (of which the syntax tree is depicted in Figure~\ref{fig_syntree_ex}) as constructed in the proof of Theorem~\ref{thm_UFO_AC0} can be seen in Figure~\ref{fig_unifo_ex}.

\begin{figure}[h]
\begin{center}
\begin{tikzpicture}[
	scale=0.6,
	every node/.style={transform shape}, 
	base/.style={circle,draw,minimum size=30pt}, 
	circ/.style={minimum size=35pt},
	triangle/.style={regular polygon, regular polygon sides = 3, draw, inner sep=0, text width=15mm}
]
\tikzstyle{level 1}=[sibling distance=60mm]
\tikzstyle{level 2}=[sibling distance=40mm]
\tikzstyle{level 3}=[sibling distance=50mm]
\tikzstyle{level 4}=[sibling distance=30mm]
\tikzstyle{level 5}=[sibling distance=25mm]
\tikzstyle{level 6}=[sibling distance=12mm]
\node[base] (out) {$out$}
	child { node[base] (sign1) {$\textit{sign}$} 
		child{ node[base] (p1) {$+$} 
			child{ node[base] (sign21) {$\textit{sign}$} 
				child{ node[base] (p21) {$+$}
					child{ node[base] (e11) {$=$} 
						child{ node[base] (d1) {$+$} }
						child{ node[base] (c11) {$3$} }
					}
					child{ node[base] (e12) {$=$} 
						child{ node[base] (d2) {$+$} }					
						child{ node[base] (c12) {$2$} }
					}
				}			
			}
			child{ node[base] (sign22) {$\textit{sign}$} 
				child{ node[base] (p22) {$+$}
					child{ node[base] (e21) {$=$} 
						child{ node[base] (d3) {$+$} }
						child{ node[base] (c21) {$3$} }
					}
					child{ node[base] (e22) {$=$} 
						child{ node[base] (d4) {$+$} }
						child{ node[base] (c22) {$2$} }
					}
				}			
			}
			child{ node[base] (sign23) {$\textit{sign}$} 
				child{ node[base] (p23) {$+$}
					child{ node[base] (e31) {$=$} 
						child{ node[base] (d5) {$+$} }
						child{ node[base] (c31) {$3$} }
					}
					child{ node[base] (e32) {$=$} 
						child{ node[base] (d6) {$+$} }
						child{ node[base] (c32) {$2$} }
					}
				}			
			}
		}
	}
	;
\node[base, below= 2.5cm of e11] (i1) {$in_1$}; 
\node[base, below= 2.5cm of e12] (i2) {$in_2$}; 
\node[base, below= 2.5cm of e21] (i3) {$in_3$}; 
\node[base, below= 2.5cm of e22] (i4) {$in_4$}; 
\node[base, below= 2.5cm of e31] (i5) {$in_5$}; 
\node[base, below= 2.5cm of e32] (i6) {$in_6$}; 

\draw[-] (i1) -- (d1);
\draw[-] (i2) -- (d3);
\draw[-] (i3) to[out=45, in=270] (d5);
\draw[-] (i4) to[out=135, in=270] (d2);
\draw[-] (i5) -- (d4);
\draw[-] (i6) -- (d6);


\end{tikzpicture}
\end{center}
\caption{$C_6$ of the \FO-uniform circuit family for the $\FO+\SUM+\PROD$-sentence $\varphi \defeq \exists x : f(x) = 3 \lor g(x) = 2$}
\label{fig_unifo_ex}
\end{figure}

The explicit construction of the formulae and terms for this families \FO-uniformity goes as follows:

We first define the formula $\varphi_\mathrm{aux}$ to specify our constants:
\begin{align*}
\varphi_\mathrm{aux} \equiv & \mathrlap{\exists n \forall x : \rk(n) \geq \rk(x) \land} \\
					 & \mathrlap{\exists u : \underbrace{\rk(u)}_{\rk(u)^{ar(f)}} + \underbrace{\rk(u)}_{\rk(u)^{ar(g)}} = \rk(n) \land} \\
					 & \mathrlap{\exists \mine \forall x : \rk(\mine) \leq \rk(x) \land} \\
					 & \exists \smine \forall x : && \rk(\smine) \neq \rk(\mine) \land \\
					 &&& (\rk(x) \neq \rk(\mine) \to \rk(\smine) \leq \rk(x))
\end{align*}

(We will still be using $1$, $2$ and $u+1$ to denote the variables $\mine$, $\smine$ and the one with rank $\rk(u)+1$, respectively.)

Since the circuit family has depth $7$, our tuples are also of this length.

We begin by setting
\begin{equation*}
\varphi^0_{+}(\ol{x}) = \varphi^0_{\times}(\ol{x}) = \varphi^0_{\mathrm{sign}}(\ol{x}) = \varphi^0_{\mathrm{const}}(\ol{x}) \defeq \bot,
\end{equation*}
\begin{equation*}
\varphi^0_{\mathrm{const\_val}}(\ol{x}) \defeq 0,
\end{equation*}
\begin{equation*}
\varphi^0_{\mathrm{quantifier\_prefix}} \defeq \top
\end{equation*}
and 
\begin{equation*}
\varphi^0_\mathrm{input\_edges}(\ol{x}, \ol{y}) \defeq \bot.
\end{equation*}
Now we proceed by going through the syntax tree of $\varphi$ which is shown in Figure~\ref{fig_syntree_ex}.
We fix the output number as follows
\begin{equation*}
\varphi_{\mathrm{output}}(\ol{x}) \defeq \ol{x} = (u+1, u+1, u+1, u+1, u+1, u+1, u+1),
\end{equation*}
and since the root gate of the circuit construction representing the top most $\exists x$ node of the syntax tree is a sign gate, we set
\[
\varphi_\mathrm{sign}^0(\ol{x}) \defeq \ol{x} = (1, u+1, u+1, u+1, u+1, u+1, u+1).
\]
Now the current node in $\tree(\varphi)$ is this $\exists$-node, so we proceed by setting 
\[
\varphi^1_+(\ol{x}) = \varphi^0_+(\ol{x}) \lor \ol{x} = (1, 1, u+1, u+1, u+1, u+1, u+1).
\]
Given that the next node in $\tree(\varphi)$ is a $\lor$ node and the root gate for the respective circuit construction is a $\mathrm{sign}$ gate, we set
\begin{align*}
\varphi^1_\mathrm{sign} \defeq &~ \varphi^0_\mathrm{sign}(\ol{x}) \lor \ol{x} = (1, 1, z, u+1, u+1, u+1, u+1) \\
\varphi^1_\mathrm{quantifier\_prefix} \defeq &~ \exists z : \rk(z) \leq \rk(u) \land \varphi^0_\mathrm{quantifier\_prefix},
\end{align*}
where $z$ is a new variable symbol.
As stated in Remark~\ref{rm_unifo_phi_unchanged}, we keep everything else unchanged.
Now the current node is that $\lor$ node, which is why we set
\begin{align*}
\varphi_+^2(\ol{x}) \defeq &~ \varphi^1_+(\ol{x}) \lor \ol{x} = (1, 1, z, 1, u+1, u+1, u+1).
\end{align*}
The next root is then an equality gate\footnote{\label{footnote_unifo_ex} Technically, we neither have equality gates in our circuit nor need a $\varphi_=$ formula for our uniformity. 
While we use the equality gate and formula here for brevity, we could just use the translation from Lemma~\ref{lem_aux_gates} to remove the equality gates and thus the need for the equality formula $\varphi_=$.
We believe that the idea still comes across like this and that enlarging the circuit for technical correctness would make this example needlessly convoluted.}
which is why we proceed by setting
\[
\varphi^2_=(\ol{x}) \defeq \varphi^1_=(\ol{x}) \lor \ol{x} = (1, 1, z, 1, 1, u+1, u+1).
\]
The current node is an equality node, the next left root is an addition gate and the next right root is a constant gate, thus we set
\begin{align*}
\varphi_+^3(\ol{x}) \defeq &~ \varphi_+^2(\ol{x}) \lor \ol{x} = (1, 1, z, 1, 1, 1, u+1)\\
\varphi_\mathrm{const}^3(\ol{x}) \defeq & \varphi_\mathrm{const}^2(\ol{x}) \lor \ol{x} = (1, 1, z, 1, 1, 2, u+1)
\end{align*}
Continuing on the left, we have a the function node $f(x)$ and thus need to specify an input edge as follows:
\begin{align*}
\varphi_\mathrm{input\_edges}^4(\ol{x}, \ol{y}) \defeq~& \varphi^3_\mathrm{input\_edges}(\ol{x}, \ol{y}) \lor \\
& x_1 = u+1 \land \dots \land x_{6} = u+1 \land \\
& 0 + \rk(z) \times 1 = \rk(x_7) \land \\
& \ol{y} = (1, 1, z, 1, 1, 1, u+1)
\end{align*}
Next up in our depth-first traversal of $\tree(\varphi)$ is the constant node $3$, the number of which was already specified in step $3$.
We only need to specify its value as follows:
\begin{multline}
\varphi^5_\mathrm{const\_val}(\ol{x}) \defeq \varphi^4_\mathrm{const\_val}(\ol{x}) + \\ \chi[\varphi_\mathrm{aux} \land \varphi^4_\mathrm{quantifier\_prefix} \land \ol{x} = (1, 1, z, 1, 1, 2, u+1)] \times 3
\end{multline}
Up next in our traversal is the right equality gate in $\tree(\varphi)$.
Analogously to step $2$, we set
\[
\varphi^6_=(\ol{x}) \defeq \varphi^5_=(\ol{x}) \lor \ol{x} = (1, 1, z, 1, 2, u+1, u+1).
\]
and 
\begin{align*}
\varphi_+^6(\ol{x}) \defeq &~ \varphi_+^5(\ol{x}) \lor \ol{x} = (1, 1, z, 1, 2, 1, u+1)\\
\varphi_\mathrm{const}^6(\ol{x}) \defeq &~ \varphi_\mathrm{const}^5(\ol{x}) \lor \ol{x} = (1, 1, z, 1, 2, 2, u+1).
\end{align*}
With only two nodes to go, similarly to what we did for $f(x)$, for the function symbol node $g(x)$ we also only need to specify input edges:
\begin{align*}
\varphi_\mathrm{input\_edges}^7(\ol{x}, \ol{y}) \defeq~& \varphi^6_\mathrm{input\_edges}(\ol{x}, \ol{y}) \lor \\
& x_1 = u+1 \land \dots \land x_{6} = u+1 \land \\
& \underbrace{\rk(u)}_{rk(u)^{ar(f)}} + \rk(z) \times 1 = \rk(x_7) \land \\
& \ol{y} = (1, 1, z, 1, 2, 1, u+1)
\end{align*}
For the last node in our traversal of $\tree(\varphi)$, the constant node $2$, we proceed similarly to how we did for $3$ in step $5$:
\begin{multline}
\varphi^8_\mathrm{const\_val}(\ol{x}) \defeq \varphi^7_\mathrm{const\_val}(\ol{x}) + \\ \chi[\varphi_\mathrm{aux} \land \varphi^7_\mathrm{quantifier\_prefix} \land \ol{x} = (1, 1, z, 1, 2, 2, u+1)] \times 2
\end{multline}

Now taken all together, we can explicitly write our uniformity formulae and terms out as follows:
%
%

\begin{align*}
\varphi_+(\ol{x}) \defeq~& \varphi_\mathrm{aux} \land \exists z : \bot \lor \\
& \ol{x} = (1, 1, u+1, u+1, u+1, u+1, u+1) \lor \\
& \ol{x} = (1, 1, z, 1, 1, 1, u+1) \lor \\
& \ol{x} = (1, 1, z, 1, 2, 1, u+1) \\
\varphi_\times(\ol{x}) \defeq~& \varphi_\mathrm{aux} \exists z : \bot \\
\varphi_\mathrm{sign}(\ol{x}) \defeq~& \varphi_\mathrm{aux} \land \exists z : \bot \lor \\
& \ol{x} = (1, u+1, u+1, u+1, u+1, u+1, u+1) \lor \\
& \ol{x} = (1, 1, z, u+1, u+1, u+1, u+1) \\
\varphi_{\mathrm{input}}(\ol{x}, \ol{i}) \defeq~& \varphi_{\mathrm{aux}} \land \exists a : \ol{i} = (a, a, a, a, a, a, a)  \land \\
& \ol{x} = (u+1, u+1, u+1, u+1, u+1, u+1, a) \\
\end{align*}
\begin{align*}
\varphi_\mathrm{E}(\ol{x}, \ol{y}) \defeq &~ \varphi_{\mathrm{aux}} \land \exists z: ~ \bigwedge\limits_{1 \leq i \leq 7} (y_i \neq u+1 \to x_i = y_i) \land \\
& ((x_1 \neq u+1 \land x_2 = u+1 \land y_1 = u+1) \lor \\
& \bigvee\limits_{2 \leq i < 7} x_i \neq u+1 \land x_{i+1} = u+1 \land y_i = u+1 \land y_{i-1} \neq u+1)) \lor \\
& \bot \lor \\
& x_1 = u+1 \land \dots \land x_{6} = u+1 \land \\
& 0 + \rk(z) \times 1 = \rk(x_7) \land \\
& \ol{y} = (1, 1, z, 1, 1, 1, u+1) \lor \\
& x_1 = u+1 \land \dots \land x_{6} = u+1 \land \\
& \rk(u) + \rk(z) \times 1 = \rk(x_7) \land \\
& \ol{y} = (1, 1, z, 1, 2, 1, u+1) \\
\varphi_{\mathrm{output}}(\ol{x}) \defeq &~ \varphi_\mathrm{aux} \land \ol{x} = (u+1, u+1, u+1, u+1, u+1, u+1, u+1) \\
\varphi_\mathrm{const}(\ol{x}) \defeq &~ \varphi_\mathrm{aux} \land \exists z : \bot \lor \\
&\ol{x} = (1, 1, z, 1, 1, 2, u+1) \lor \\
&\ol{x} = (1, 1, z, 1, 2, 2, u+1) \\
\varphi_\mathrm{const\_val}(\ol{x}) \defeq &~ 0 \\
& + \chi[\varphi_\mathrm{aux} \land \exists z :~ \ol{x} = (1, 1, z, 1, 1, 2, u+1)] \times 3 \\
& + \chi[\varphi_\mathrm{aux} \land \exists z :~ \ol{x} = (1, 1, z, 1, 2, 2, u+1)] \times 2 \\
\varphi_{\mathrm{universe}}(\ol{x}) \defeq & \varphi_+(\ol{x}) \lor \varphi_\times(\ol{x}) \lor \varphi_\mathrm{sign}(\ol{x}) \lor \exists \ol{i} : \varphi_\mathrm{input}(\ol{x}, \ol{i}) \lor \\ 
& \varphi_\mathrm{output}(\ol{x}) \lor \varphi_\mathrm{const}(\ol{x})
\end{align*}
and technically 
\begin{align*}
\varphi_=(\ol{x}) \defeq~& \varphi_\mathrm{aux} \land \exists z : \bot \lor \\
& \ol{x} = (1, 1, z, 1, 1, u+1, u+1)
& \ol{x} = (1, 1, z, 1, 2, u+1, u+1),
\end{align*}
though as discussed in the footnote on page~\pageref{footnote_unifo_ex}, those gates would ordinarily have to be tackled according to the translation in Lemma~\ref{lem_aux_gates}.
\end{example}

\begin{remark}
Even though we have only considered functional \R-structures in this paper, our findings can be generalized for \R-structures which use relations as well, since any relation can be expressed via its characteristic function.
\end{remark}

%

\section{Conclusion}

We showed that the computational power of circuits of polynomial size and constant depth over the reals can be characterized in a logical way by first-order logic on metafinite structures. 
This result is in analogy to corresponding characterizations for Boolean circuits \cite{DBLP:journals/siamcomp/Immerman89} and arithmetic circuits \cite{DBLP:journals/apal/HaakV19}.
The results presented in this paper mostly do not make use of any special properties of the real numbers and can 
be generalized for other fields with suitably adapted logic and circuit definitions.

In the Boolean and arithmetic context, it is known \cite{DBLP:journals/jcss/BarringtonIS90} that the numerical predicates of addition and multiplication play a special role: If we enhance first-order logic by these, we obtain a logic as powerful as dlogtime-uniform $\mathrm{AC}^0$-circuits, i.e., U$_\text{D}\text{-}\mathrm{AC}^0=\mathrm{FO}[+,\times]$ (see also \cite{DBLP:books/daglib/0097931}).
This does not seem to hold in our case of computation over the real numbers: $\unil\text{-}\ACO$ looks more powerful than $\FO[+,\times]$, since real numbers can be manipulated more generally by \R-machines operating in logarithmic time than in first-order formulas, because it seems that a logarithmic number of operations on reals cannot be simulated in first-order logic.
Maybe an equivalence can be obtained with a more powerful logic for real numbers, but this is a question for further research. 
However, an analogue to the Boolean equality holds if we consider uniformity defined itself in a logical way: The identity U$_\text{FO}\text{-}\mathrm{AC}^0=\mathrm{FO}[+,\times]$, well known in the Boolean world, holds in the real setting as well.


While investigating uniform circuit classes over the reals, we found that uniformity behaves somewhat differently in the real setting than it does in the Boolean one. In the classical setting, the question of uniformity arises quite naturally, since small classes like non-uniform $\mathrm{AC}^0$ already contain undecidable problems with respect to Turing machines. In the case of \ACO{} and \R-machines, the same is at least not quite obvious and worth looking into further.

We consider it worthwhile to study logical characterizations of analogues of further circuit classes of unbounded or semi-unbounded fan-in, most prominently $\mathrm{SAC}^1_\R$ and  $\mathrm{AC}^1_\R$. In the theory of arithmetic complexity, i.e., computation over arbitrary semi-rings, first an analogue of Immerman's Theorem was shown in \cite{DBLP:journals/apal/HaakV19}, and this was later used to obtain logical characterizations of the larger arithmetic classes $\mathrm{\#NC}^1$, $\mathrm{\#SAC}^1$ and  $\mathrm{\#AC}^1$ \cite{DBLP:conf/lics/0001HV18}. Remarkably these characterizations did not build on logics with repeated quantifier blocks (like in \cite{DBLP:journals/siamcomp/Immerman89}) or restricted fixed-point logic (like in \cite{DBLP:journals/jsyml/CuckerM99}). Instead, new logical characterizations of the Boolean classes $\mathrm{NC}^1$, $\mathrm{SAC}^1$ and  $\mathrm{AC}^1$ were given, somewhat similar to earlier ideas from Compton and Laflamme \cite{DBLP:journals/iandc/ComptonL90}, and these were then shifted to the arithmetic setting. Maybe this can also be useful in our context to develop characterizations for $\mathrm{SAC}^1_\R$ and  $\mathrm{AC}^1_\R$ (and maybe obtain a new characterization of $\mathrm{AC}^1_\R$).

In the theory of computation over the reals, separations among classes are known which are widely open in the discrete world; we only mention the separation of $\mathrm{NC}_\R$ and $\mathrm{P}_\R$ \cite{DBLP:journals/jc/Cucker92}. In the circuit world, the most prominent open question is if $\mathrm{TC}^0=\mathrm{NC}^1$ (see the discussion in \cite{DBLP:books/daglib/0097931}). In our context, it is intriguing to study the landscape between $\ACO$ and $\mathrm{NC}^1_\R$. Is there any meaningful way to add computational power to $\ACO$ without already arriving at the full power of $\mathrm{NC}^1_\R$? Observe that up to date,  no reasonable real analogue of the class $\mathrm{TC}^0$ is known. In Boolean complexity, $\mathrm{TC}^0$ is obtained by enriching $\mathrm{AC}^0$-circuits with majority gates. Here, the class $\ACO$ is closed under all reasonable forms of majority and threshold operations. A first step forward will be to separate $\ACO$ and $\mathrm{NC}^1_\R$, a real world analogue of a classical circuit separation from the eighties \cite{DBLP:journals/mst/FurstSS84}.

%
%
%
\bibliographystyle{splncs04}
%
\bibliography{RealAC0} 

\begin{thebibliography}{10}
\providecommand{\url}[1]{\texttt{#1}}
\providecommand{\urlprefix}{URL }
\providecommand{\doi}[1]{https://doi.org/#1}

\bibitem{DBLP:journals/jcss/BarringtonIS90}
Barrington, D.A.M., Immerman, N., Straubing, H.: On uniformity within
  {NC{\({^1}\)}}. J. Comput. Syst. Sci.  \textbf{41}(3),  274--306 (1990),
  \url{https://doi.org/10.1016/0022-0000(90)90022-D}

\bibitem{DBLP:books/lib/Blum98}
Blum, L.: Complexity and real computation. Springer (1998),
  \url{https://www.worldcat.org/oclc/37004484}

\bibitem{DBLP:conf/focs/BlumSS88}
Blum, L., Shub, M., Smale, S.: On a theory of computation over the real
  numbers; {NP} completeness, recursive functions and universal machines
  (extended abstract). In: 29th Annual Symposium on Foundations of Computer
  Science. pp. 387--397. {IEEE} Computer Society (1988),
  \url{https://doi.org/10.1109/SFCS.1988.21955}

\bibitem{DBLP:journals/iandc/ComptonL90}
Compton, K.J., Laflamme, C.: An algebra and a logic for {NC}{\({^1}\)}. Inf.
  Comput.  \textbf{87}(1/2),  240--262 (1990),
  \url{https://doi.org/10.1016/0890-5401(90)90063-N}

\bibitem{DBLP:journals/jc/Cucker92}
Cucker, F.: P\({}_{\mbox{ℝ}}\) {$\neq$} {NC}\({}_{\mbox{ℝ}}\). J.
  Complexity  \textbf{8}(3),  230--238 (1992),
  \url{https://doi.org/10.1016/0885-064X(92)90024-6}

\bibitem{DBLP:journals/jsyml/CuckerM99}
Cucker, F., Meer, K.: Logics which capture complexity classes over the reals.
  J. Symb. Log.  \textbf{64}(1),  363--390 (1999),
  \url{https://doi.org/10.2307/2586770}

\bibitem{DBLP:conf/lics/0001HV18}
Durand, A., Haak, A., Vollmer, H.: Model-theoretic characterization of
  {B}oolean and arithmetic circuit classes of small depth. In: Proceedings of
  the 33rd Annual {ACM/IEEE} Symposium on Logic in Computer Science. pp.
  354--363. {ACM} (2018), \url{https://doi.org/10.1145/3209108.3209179}

\bibitem{fag74}
Fagin, R.: Generalized first-order spectra and polynomial time recognizable
  sets. In: Karp, R. (ed.) Complexity of Computations, SIAM--AMS Proceedings,
  vol.~7, pp. 43--73. American Mathematical Society, Providence, RI (1974),
  \url{https://researcher.watson.ibm.com/researcher/files/us-fagin/genspec.pdf}

\bibitem{DBLP:journals/mst/FurstSS84}
Furst, M.L., Saxe, J.B., Sipser, M.: Parity, circuits, and the polynomial-time
  hierarchy. Mathematical Systems Theory  \textbf{17}(1),  13--27 (1984),
  \url{https://doi.org/10.1007/BF01744431}

\bibitem{DBLP:conf/stoc/GradelM95}
Gr{\"{a}}del, E., Meer, K.: Descriptive complexity theory over the real
  numbers. In: Proceedings of the Twenty-Seventh Annual {ACM} Symposium on
  Theory of Computing. pp. 315--324 (1995),
  \url{https://doi.org/10.1145/225058.225151}

\bibitem{DBLP:journals/apal/HaakV19}
Haak, A., Vollmer, H.: A model-theoretic characterization of constant-depth
  arithmetic circuits. Ann. Pure Appl. Log.  \textbf{170}(9),  1008--1029
  (2019), \url{https://doi.org/10.1016/j.apal.2019.04.006}

\bibitem{Immerman87languagesthat}
Immerman, N.: Languages that capture complexity classes. SIAM Journal of
  Computing  \textbf{16},  760--778 (1987),
  \url{https://doi.org/10.1137/0216051}

\bibitem{DBLP:journals/siamcomp/Immerman89}
Immerman, N.: Expressibility and parallel complexity. {SIAM} J. Comput.
  \textbf{18}(3),  625--638 (1989), \url{https://doi.org/10.1137/0218043}

\bibitem{DBLP:books/daglib/0097931}
Vollmer, H.: Introduction to Circuit Complexity - {A} Uniform Approach.
  Springer (1999), \url{https://doi.org/10.1007/978-3-662-03927-4}

\end{thebibliography}
\newpage

\appendix
\end{document}